\theoremstyle{plain}
\newtheorem{theorem}{Theorem}[section]
\newtheorem{lemma}[theorem]{Lemma}
\newtheorem{corollary}[theorem]{Corollary}
\newtheorem{definition}[theorem]{Definition}
\newtheorem{remark}[theorem]{Remark}
\newtheorem{example}[theorem]{Example}
\DeclareMathOperator{\subrank}{Q}
\DeclareMathOperator{\asymptoticsubrank}{\undertilde{Q}}
\newcommand{\norm}[1]{\left\|#1\right\|}
\newcommand{\ball}[2]{B_{#1}(#2)}
\newcommand{\ket}[1]{\left|#1\right\rangle}
\newcommand{\ketbra}[2]{\left|#1\middle\rangle\!\middle\langle#2\right|}
\newcommand{\setbuild}[2]{\left\{#1\middle|#2\right\}}
\newcommand{\loccto}[1][]{\xrightarrow{\textnormal{LOCC}}_{#1}}
\DeclareMathOperator{\boundeds}{\mathcal{B}}
\DeclareMathOperator{\states}{\mathcal{S}}
\DeclareMathOperator{\substates}{\mathcal{S}_{\le}}
\DeclareMathOperator{\Tr}{Tr}
\DeclareMathOperator{\support}{supp}
\DeclareMathOperator{\probability}{Pr}
\DeclareMathOperator{\mean}{\mathbb{E}}
\DeclareMathOperator{\GHZ}{GHZ}
\newcommand{\entropy}{H}
\newcommand{\relativeentropy}[3][]{D_{#1}\left(#2\middle\|#3\right)}
\newcommand{\mutualinformation}{I}
\newcommand{\maxentropy}[1][]{H_{\textnormal{max}}^{#1}}
\newcommand{\minentropy}[1][]{H_{\textnormal{min}}^{#1}}
\newcommand{\distributions}[1][]{\mathcal{P}_{#1}}
\newcommand{\subdistributions}{\mathcal{P}_{\le}}
\newcommand{\typeclass}[2]{T^{#1}_{#2}}
\newcommand{\fidelity}{F}
\newcommand{\purifieddistance}{D}
\newcommand{\upentropy}[1][]{H^{\uparrow}_{#1}}
\newcommand{\downentropy}[1][]{H^{\downarrow}_{#1}}
\newcommand{\distillablerate}{E_{D,\textnormal{GHZ}}}
\newcommand{\entanglementcost}{E_{C}}
\newcommand{\entanglementofformation}{E_{F}}
\newcommand{\altminentropy}[1][]{\widehat{H}_{\textnormal{min}}^{#1}}
\newtheorem{problem}{Problem}
\title{Distillation of Greenberger--Horne--Zeilinger states by combinatorial methods}
\author[1,2,3]{P\'eter Vrana}
\author[2]{Matthias Christandl}
\affil[1]{Department of Geometry, Budapest University of Technology and Economics, Egry J\'ozsef u. 1., 1111 Budapest, Hungary}
\affil[2]{QMATH, Department of Mathematical Sciences, University of Copenhagen, Universitetsparken 5, 2100 Copenhagen, Denmark}
\affil[3]{MTA-BME Lend\"ulet Quantum Information Theory Research Group}
\date{\today}
\begin{document}
\maketitle

\begin{abstract}
We prove a lower bound on the rate of Greenberger--Horne--Zeilinger states distillable from pure multipartite states by local operations and classical communication (LOCC). Our proof is based on a modification of a combinatorial argument used in the fast matrix multiplication algorithm of Coppersmith and Winograd. Previous use of methods from algebraic complexity in quantum information theory concerned transformations with stochastic local operations and classical communication (SLOCC), resulting in an asymptotically vanishing success probability. In contrast, our new protocol works with asymptotically vanishing error.
\end{abstract}

\section{Introduction}

When two or more parties are only allowed to operate locally and use classical communication channels (LOCC), entanglement shared between them becomes a resource. This resource plays a central role in quantum information theory, therefore much effort has been put into understanding the possible transformations between different kinds of entangled states under LOCC operations.

Entanglement in bipartite pure states is well understood, and the condition for convertibility becomes particularly simple in the limit of many copies \cite{bennett1996concentrating}. The reason is that any LOCC transformation between such states is asymptotically reversible, therefore there is an essentially unique quantity measuring the amount of entanglement, the von~Neumann entropy of the two reduced states. It is common to choose the base of logarithm to be two, which amounts to choosing the ebit to be the unit of entanglement.

The situation is much more complicated for more than two parties because of the absence of a unique standard state into which any other (pure) state can be reversibly transformed. Computing the conversion rates between any pair of states is certainly out of reach. Instead of this, one can e.g. focus on a specific target state of interest and try to find how many copies of it can be distilled from many copies of an arbitrary state. When the target state is an ebit shared between a specified pair of states, this optimal rate is known as the asymptotic entanglement of assistance and, for pure initial states, coincides with the minimum entanglement entropy over the possible bipartitions separating the members of the specified pair \cite{horodecki2005partial,smolin2005entanglement}. More generally, one can consider products of ebits in some fixed configuration as in the entanglement combing protocol \cite{yang2009entanglement}.

Much less is known when the target state contains genuine multipartite entanglement. The simplest such state is arguably the (multipartite) Greenberger--Horne--Zeilinger (GHZ) state
\begin{equation}
\frac{1}{\sqrt{2}}\left(\ket{00\ldots 0}+\ket{11\ldots 1}\right),
\end{equation}
the key ingredient of the quantum secret sharing protocol \cite{hillery1999quantum}. The method of ref. \cite{smolin2005entanglement} provides a lower bound of simultaneously distilling EPR pairs and GHZ states. Very recently, a combination of the entanglement combing and state merging protocols has been applied to transformations between multipartite entangled states \cite{streltsov2017rates}.

Our result is a new lower bound on the distillable GHZ rate ($\distillablerate$, for a precise definition see Section~\ref{sec:notations}) for pure multipartite states, and can be stated in terms of the joint probability distribution induced by measuring the state in a product orthonormal basis. We need the concept of Shannon conditional entropy
\begin{equation}
\entropy(X|Y)_P=-\sum_{y\in\mathcal{Y}}P_Y(y)\sum_{x\in\mathcal{X}}P_{X|Y}(x|y)\log P_{X|Y}(x|y),
\end{equation}
where $P$ is the joint distribution of $X$ and $Y$.
\begin{theorem}\label{thm:asymptotic}
Let $\ket\psi\in\mathbb{C}^{I_1}\otimes\cdots\otimes\mathbb{C}^{I_k}$ be a unit vector and $P(i_1,\ldots,i_k)=|\psi_{i_1\ldots i_k}|^2$ the associated probability distribution, considered to be the joint distribution of random variables $A_1,\ldots,A_k$. Let $x_1,\ldots,x_k\ge 0$ be such that
\begin{equation*}
\forall J\subseteq[k],J\neq\emptyset,J\neq[k]:\sum_{j\in J}x_j\ge\entropy(A_J|A_{\overline{J}})_P.
\end{equation*}
Then $\displaystyle\distillablerate(\ketbra{\psi}{\psi})\ge\entropy(P)-\sum_{j=1}^kx_j$.
\end{theorem}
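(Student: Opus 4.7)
The plan is to implement a coherent LOCC analogue of Slepian--Wolf source coding: each party applies a random hash that reduces the dimension of her system, the hash outcomes are broadcast classically, and the residual post-measurement state is shown to be close to a large-dimensional GHZ state.

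I would begin by reducing to the typical set. Expanding $\ket{\psi}^{\otimes n}=\sum_{a\in I_1^n\times\cdots\times I_k^n}\psi^n_a\ket{a_1}\otimes\cdots\otimes\ket{a_k}$ and truncating to the $\epsilon$-typical set $T_n$ of $P^{\otimes n}$, one obtains a state of fidelity $1-o(1)$ with $\ket{\psi}^{\otimes n}$ whose nonzero amplitudes all satisfy $|\psi^n_a|^2\in[2^{-n(\entropy(P)+\epsilon)},2^{-n(\entropy(P)-\epsilon)}]$. Each party $j$ then samples a uniformly random hash $f_j:I_j^n\to[M_j]$ with $M_j=\lceil 2^{n(x_j+\delta)}\rceil$, extends it to a local unitary $\ket{a_j}\mapsto\ket{f_j(a_j)}\ket{g_j(a_j)}$ (any bijective completion $g_j$), measures the hash register, and broadcasts the outcome $h_j$. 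Conditioned on $h=(h_1,\ldots,h_k)$, the residual state is supported on the fiber $S_h=T_n\cap\prod_j f_j^{-1}(h_j)$ with amplitudes $\psi^n_a$.

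The heart of the argument is a combinatorial claim: with high probability over $f$ and over typical $h$, the projection $S_h\to I_j^n$, $a\mapsto a_j$, is injective for every $j\in[k]$. Given any $a\neq a'$ in $T_n$ with $a_j=a_j'$, let $J=\{i:a_i\neq a_i'\}\subseteq\overline{\{j\}}$; the probability that both $a$ and $a'$ land in the same fiber under random $f$ is $\prod_{i\in J}M_i^{-1}\approx 2^{-n\sum_{i\in J}(x_i+\delta)}$, while type counting bounds the number of such $a'$ given $a$ by $2^{n(\entropy(A_J|A_{\overline J})_P+o(1))}$. Summing expectations over $j\in[k]$ and over nonempty $J\subseteq\overline{\{j\}}$ --- equivalently, over every $J$ with $\emptyset\neq J\neq[k]$, which is exactly the hypothesis --- the inequality $\sum_{i\in J}x_i\ge\entropy(A_J|A_{\overline J})_P$ together with the slack $\delta$ makes the total expected number of bad pairs $o(|T_n|)$. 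An averaging argument over $f$ and $h$ then yields a deterministic choice of $f$ for which diagonality fails on an $h$-set of total probability $o(1)$.

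When diagonality holds, typicality implies that the residual state is close in fidelity to $|S_h|^{-1/2}\sum_{a\in S_h}\ket{a_1}\cdots\ket{a_k}$; each party, knowing $f$ and $h$, can then apply a local permutation relabelling her distinct $a_j$-values appearing in $S_h$ by $\ket{1},\ldots,\ket{|S_h|}$ in a fixed canonical order, producing a $|S_h|$-dimensional $\GHZ$ state, equivalent to $\log_2|S_h|$ standard GHZ ebits. Since typical $|S_h|$ equals $2^{n(\entropy(P)-\sum_j x_j-k\delta)\pm o(n)}$, letting $\delta,\epsilon\to 0$ and $n\to\infty$ delivers the claimed rate.

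The main obstacle is the combinatorial estimate of the third paragraph: one must control all $2^k-2$ relevant subsets $J$ simultaneously and then translate the probabilistic collision bound into a genuine fidelity guarantee --- this is precisely what separates the present LOCC protocol with vanishing error from the SLOCC arguments with vanishing success probability that the abstract contrasts it with. A subsidiary but essential check is that the final relabelling really is LOCC, which it is, since the hashes $f_j$ are common randomness and $h$ has been broadcast.
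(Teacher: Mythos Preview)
Your strategy is essentially the paper's: random hashes $f_j$ are exactly the random set partitions of Section~\ref{sec:main}, your collision estimate is Lemma~\ref{lem:meanbounds}, and the typical-set truncation followed by the combinatorial bound is precisely how the paper proves Theorem~\ref{thm:asymptotic} from Corollary~\ref{cor:oneshotsimpler}. Two steps, however, are stated incorrectly and require the repairs the paper makes.

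First, ``expected number of bad pairs $o(|T_n|)$'' does \emph{not} imply ``diagonality fails on an $h$-set of total probability $o(1)$''. A single colliding pair in every fiber would make \emph{all} fibers non-diagonal while accounting for only $\prod_j M_j\approx 2^{n\sum_j x_j}\ll|T_n|$ bad pairs. What your bound does yield is that the number of bad \emph{elements} (those participating in some collision) is $o(|T_n|)$, hence their total probability weight is $o(1)$. The fix---built into the paper's protocol as a second measurement in Lemma~\ref{lem:todiagonals}---is to discard not the entire fiber $S_h$ but only the colliding elements, i.e.\ to project onto the set $\Gamma_h$ of isolated vertices in the collision graph on $S_h$. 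Each party projects onto $\pi_j(\Gamma_h)$, which is well defined because $\Gamma_h$ is a free diagonal, and the total failure weight is then genuinely $o(1)$.

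Second, typicality does \emph{not} make the residual state on $\Gamma_h$ close in fidelity to the uniform superposition $|\Gamma_h|^{-1/2}\sum_{a\in\Gamma_h}\ket{a}$. Amplitudes on the typical set vary by a factor $2^{n\epsilon}$, and the fidelity between the resulting generalized GHZ state and the uniform one is in general bounded away from $1$ (for a worst-case split of amplitudes it stays near $1/\sqrt{2}$). What \emph{is} true is that the non-uniform GHZ state can be LOCC-converted via Nielsen's theorem to a uniform one of dimension $2^{\lfloor H_{\min}\rfloor}$, and typicality gives $H_{\min}\ge\log|\Gamma_h|-2n\epsilon$, which suffices asymptotically. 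The paper handles this step through the random-GHZ formalism and the conditional smooth min-entropy bounds of Lemma~\ref{lem:approximatefrommixture} and Appendix~\ref{sec:entropies}; a direct Nielsen argument would also work here.
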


The proof is inspired by a recently found connection between entanglement transformations and algebraic complexity theory \cite{chitambar2008tripartite,chen2010tensor,yu2014obtaining,vrana2015asymptotic}. It has been observed that complexity upper bounds on tensor powers of bilinear maps can be directly interpreted as achievability by asymptotic SLOCC transformations. In the one-shot regime, SLOCC transformations were introduced in \cite{bennett2000exact} as a relaxation of LOCC convertibility, while a characterisation in terms of tensor products of linear maps was given in \cite{dur2000three}. In the asymptotic regime LOCC transformations are allowed to introduce a small error, but the notion of asymptotic SLOCC convertibility requires the final state to be reached exactly for any finite number of copies, albeit only with some nonzero probability. For this reason, the two types of conversion rates are incomparable in the sense that any of them can be (arbitrarily) larger than the other, depending on the initial and target states. As a simple example, consider the family of two-qubit pure states $\ket{\psi_p}=\sqrt{p}\ket{00}+\sqrt{1-p}\ket{11}$ ($p\in(0,1)$): the optimal SLOCC rate of the transformation $\ket{\psi_p}$ to $\ket{\psi_q}$ is always $1$, whereas the optimal LOCC rate is $h(p)/h(q)$, which can take any positive value. Thus it is not possible to directly translate results in algebraic complexity theory into bounds on asymptotic LOCC transformations. Nevertheless, we will exhibit a nontrivial SLOCC protocol which can be upgraded to an (asymptotically perfect) LOCC one.

The starting point of our investigation is a combinatorial result from \cite{coppersmith1990matrix}, which forms the basis of the upper bound $2.40363\ldots$ on the exponent of matrix multiplication. The result is a lower bound on the asymptotic subrank of a specific set, which was later generalized to a large family of sets (called tight sets) in \cite{strassen1991degeneration}, where a matching upper bound was also derived. One of the main ideas of the lower bound proof is intersecting (a large power of) the subset with a product of random subsets with a carefully chosen joint distribution, one which makes use of the tightness of the set. We investigate the effect of choosing the distribution in a different, simpler way, which leads to a weaker bound, but one which applies to subsets without such special structure. When applied to the support of a state in a product basis, the asymptotic subrank serves as a lower bound on the rate at which GHZ states can be extracted by asymptotic SLOCC transformations.

We then show how to adapt the idea to LOCC transformations. Instead of taking the intersection of the support with subsets, which would amount to projecting out a large portion of the state, we replace this first step with a randomly chosen measurement and apply the rest of the protocol to the post-measurement state. At the same time, we control the coefficients of the resulting GHZ-like states to estimate the equivalent number of standard GHZ states.

The structure of the paper is as follows. In Section~\ref{sec:subrank} we give a high-level explanation the combinatorial result from \cite{coppersmith1990matrix,strassen1991degeneration} which forms the basis of the upper bound $2.40363\ldots$ on the exponent of matrix multiplication. We present the argument in a form which differs from the original formulation, mainly to separate those ideas that we use in later sections to prove our main result from those that we do not. In Section~\ref{sec:main} we prove our main result. Some properties of our lower bound as well as number of examples are presented in Section~\ref{sec:linearprogram}. In Section~\ref{sec:conclusion} we compare the lower bound with earlier bounds from the literature. Sections \ref{sec:entropies} and \ref{sec:GHZmixtures} contain the proofs of technical lemmas related to the conversion of nonuniform GHZ-like states to approximately uniform ones.

\section{The asymptotic subrank of a subset}\label{sec:subrank}

In this section we review and generalize a technique used by Coppersmith and Winograd \cite{coppersmith1990matrix}, and by Strassen \cite{strassen1991degeneration}, which gives a lower bound on the asymptotic subrank of a subset inside a Cartesian product. This puts Lemma~\ref{lem:meanbounds} into context, which is the only ingredient from this section to be used later.

The asymptotic subrank of such a subset can be viewed as a combinatorial analog of the asymptotic subrank of tensors (not discussed here) as well as the distillable GHZ rate of pure states.
\begin{definition}\label{def:subrank}
Let $I_1,\ldots,I_k$ be finite sets. A subset $\Gamma\subseteq I_1\times\cdots\times I_k$ is called a diagonal if the restriction maps $\pi_j|_{\Gamma}$ are injective for each $j\in\{1,\ldots,k\}$, where $\pi_j:I_1\times\cdots\times I_k\to I_j$ denotes the $j$th projection.

Let $\Phi\subseteq I_1\times\cdots\times I_k$ arbitrary. A subset $\Gamma\subseteq\Phi$ is called free (for $\Phi$) if $\Gamma=\Phi\cap(\pi_1(\Gamma)\times\cdots\times\pi_k(\Gamma))$.

The subrank $\subrank(\Phi)$ is the size of the largest free diagonal $\Gamma\subseteq\Phi$. For $\Phi\subseteq I_1\times\cdots\times I_k$ and $\Psi\subseteq J_1\times\cdots\times J_k$ we define the product $\Phi\times\Psi\subseteq(I_1\times J_1)\times\cdots\times(I_k\times J_k)$. The asymptotic subrank is $\asymptoticsubrank(\Phi)=\lim_{n\to\infty}\subrank(\Phi^{\times n})^{1/n}=\sup\subrank(\Phi^{\times n})^{1/n}$.
\end{definition}
For example, the support of an $r$-level tripartite generalized GHZ state in the usual basis is $\{(1,1,1),(2,2,2),\ldots,(r,r,r)\}$, which is itself a diagonal of size $r$, therefore has subrank $r$. On the other hand, the support of the W state $\frac{1}{\sqrt{3}}(\ket{100}+\ket{010}+\ket{001})$ is $\{(1,0,0),(0,1,0),(0,0,1)\}$, which has subrank $1$.

The key steps in the Coppersmith--Winograd--Strassen lower bound method can be described as follows:
\begin{enumerate}[1.]
\item Draw subsets $W_j\subseteq I_j$ ($j=1,\ldots,k$) from some distribution (possibly in a correlated way).
\item Consider the (random) graph $G=(V,E)$ with $V=\Phi\cap(W_1\times\cdots\times W_k)$ and
\begin{equation*}
E=\setbuild{\{(i_1,\ldots,i_k),(i'_1,\ldots,i'_k)\}\in\binom{V}{2}}{\exists j:i_j=i'_j}.
\end{equation*}
\item The set $\Gamma$ of isolated vertices in $G$ is a free diagonal.
\item Bound $\mean|\Gamma|$ from below and use $\mean|\Gamma|\le\subrank(\Phi)$.
\end{enumerate}
In refs. \cite{coppersmith1990matrix,strassen1991degeneration} the set $\Phi$ is a truncation of a large power of the tight (see \cite[Section 5.]{strassen1991degeneration} for the definition) set $\Psi$ and the joint distribution of the subsets $W_j$ is carefully chosen accordingly. We do not wish to make such restrictions at this point, but we will assume the following property which simplifies the calculations considerably.
\begin{definition}\label{def:homogeneous}
We say that the joint distribution of $W_1,\ldots,W_k$ is homogeneous for $\Phi$ if $\probability[i_1,i'_1\in W_1,\ldots,i_k,i'_k\in W_k]$ depends on $i_1,i'_1,\ldots,i_k,i'_k$ only through the subset $J:=\setbuild{j\in[k]}{i_j\neq i'_j}$ when $(i_1,\ldots,i_k)\in\Phi$ and $(i'_1,\ldots,i'_k)\in\Phi$.
\end{definition}

The following lemma is the core of the argument. For bounding the subrank, only the lower bound is needed and even that only in the special case when $f$ is the constant $1$ function. However, we will need the general form later, and the proof is essentially the same.
\begin{lemma}\label{lem:meanbounds}
Let $W_j\subseteq I_j$ be random subsets with distribution homogeneous for some $\Phi\subseteq I_1\times\cdots\times I_k$. Let $p_J$ denote the common value of $\probability[i_1,i'_1\in W_1,\ldots,i_k,i'_k\in W_k]$ when $J:=\setbuild{j\in[k]}{i_j\neq i'_j}$. Consider the set $\Gamma$ of isolated vertices in the random graph $G=(V,E)$ as introduced before. Then for any function $f:\Phi\to\mathbb{R}_+$ the following estimates hold:
\begin{equation}
\left(p_{\emptyset}-\sum_{\substack{J\subseteq[k]  \\  J\neq\emptyset,J\neq[k]}}2^{\maxentropy(A_J|A_{\overline{J}})_\Phi}p_J\right)\sum_{i\in\Phi}f(i)\le\mean\sum_{i\in\Gamma}f(i)\le p_{\emptyset}\sum_{i\in\Phi}f(i).
\end{equation}
Here the max-entropy refers to any random variable with support equal to $\Phi$, i.e.
\begin{equation}
\maxentropy(A_J|A_{\overline{J}})_\Phi=\log\max_{(i_j)_{j\in \overline{J}}}\left|\setbuild{(i_j)_{j\in J}}{(i_1,\ldots,i_k)\in\Phi}\right|.
\end{equation}
\end{lemma}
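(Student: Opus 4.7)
The plan is to compute
$$\mean \sum_{i \in \Gamma} f(i) = \sum_{i \in \Phi} f(i)\, \probability[i \in \Gamma]$$
and derive the two inequalities by bounding the single-vertex probability $\probability[i \in \Gamma]$ pointwise in $i$. First I would invoke the homogeneity assumption with $i = i'$ (so $J = \emptyset$) to identify $\probability[i \in V] = p_{\emptyset}$ for every $i \in \Phi$. The upper bound then follows immediately from $\Gamma \subseteq V$ and nonnegativity of $f$.

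For the lower bound I would use a union bound over the obstructions to isolation. A vertex $i \in V$ fails to be isolated if and only if some other $i' \in \Phi$ lies in $V$ and shares a coordinate with $i$; writing $J(i,i') = \{j : i_j \ne i'_j\}$, the ``shares a coordinate'' condition is $J(i,i') \ne [k]$, while distinctness forces $J(i,i') \ne \emptyset$. Homogeneity says that $\probability[\{i,i'\} \subseteq V]$ depends only on $J(i,i')$, so grouping the potential neighbours by the value of $J$ yields
$$\probability[i \in \Gamma] \ge p_{\emptyset} - \sum_{\substack{J \subseteq [k] \\ J \ne \emptyset,\, J \ne [k]}} N_J(i)\, p_J,$$
where $N_J(i) = |\{i' \in \Phi : J(i,i') = J\}|$.

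The main combinatorial step, and arguably the only nontrivial one, is bounding $N_J(i)$ in terms of a max-entropy. Any $i'$ counted by $N_J(i)$ agrees with $i$ on the coordinates outside $J$, so it is a $\Phi$-extension of the partial tuple $(i_j)_{j \in \overline{J}}$ along the $J$-coordinates. Therefore $N_J(i)$ is bounded by the maximum fibre size, which by definition equals $2^{\maxentropy(A_J|A_{\overline{J}})_\Phi}$. Substituting this bound, multiplying by $f(i) \ge 0$, and summing over $i \in \Phi$ yields the stated inequality. The one item that needs careful bookkeeping is the range of $J$: one must exclude $J = [k]$ (those $i'$ produce no edge and are irrelevant to isolation) and $J = \emptyset$ (which would only contribute $i' = i$, already absorbed in the base term $p_\emptyset$). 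Once these are handled, the rest is routine.
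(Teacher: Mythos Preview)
Your proposal is correct and essentially the same argument as the paper's. The only difference is framing: the paper starts from the deterministic per-realization inequality $\sum_{i\in\Gamma}f(i)\ge\sum_{i\in V}f(i)-\sum_{\{i,i'\}\in E}(f(i)+f(i'))$ and then takes expectations, whereas you write $\mean\sum_{i\in\Gamma}f(i)=\sum_{i\in\Phi}f(i)\,\probability[i\in\Gamma]$ and apply a union bound to each $\probability[i\in\Gamma]$; after unpacking, both routes produce exactly the same sum over colliding pairs, grouped by $J$ and controlled by the max-entropy bound on $N_J(i)$.
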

\begin{proof}
For any realization of the random graph $G=(V,E)$ the following inequalitites hold (note that the second sum is over unordered pairs):
\begin{equation}
\sum_{i\in V}f(i)-\sum_{\{i,i'\}\in E}\left(f(i)+f(i')\right)\le\sum_{i\in\Gamma}f(i)\le\sum_{i\in V}f(i).
\end{equation}
This implies that similar relations are true for the expected values, i.e.
\begin{equation}
\mean\sum_{i\in V}f(i)-\mean\sum_{\{i,i'\}\in E}\left(f(i)+f(i')\right)\le\mean\sum_{i\in\Gamma}f(i)\le\mean\sum_{i\in V}f(i).
\end{equation}
The first sum can be computed as
\begin{equation}
\mean\sum_{i\in V}f(i) = \sum_{i\in \Phi}\probability[i_1\in W_1,\ldots,i_k\in W_k]f(i) = \sum_{i\in \Phi}p_{\emptyset}f(i),
\end{equation}
while the second one can be bounded as
\begin{equation}
\begin{split}
\mean\sum_{\{i,i'\}\in E}\left(f(i)+f(i')\right)
 & = \sum_{\substack{\{i,i'\}\in\binom{\Phi}{2}  \\  \exists j:i_j=i'_j}}\probability[i_1,i'_1\in W_1,\ldots,i_k,i'_k\in W_k]\left(f(i)+f(i')\right)  \\
 & = \sum_{i\in\Phi}\sum_{\substack{i'\in\Phi\setminus\{i\}  \\  \exists j:i_j=i'_j}}\probability[i_1,i'_1\in W_1,\ldots,i_k,i'_k\in W_k]f(i)  \\
 & = \sum_{i\in\Phi}\sum_{\substack{J\subseteq[k]  \\  J\neq\emptyset,J\neq[k]}}\sum_{\substack{i'\in\Phi  \\  \setbuild{j}{i_j\neq i'_j}=J}}\probability[i_1,i'_1\in W_1,\ldots,i_k,i'_k\in W_k]f(i)  \\
 & = \sum_{\substack{J\subseteq[k]  \\  J\neq\emptyset,J\neq[k]}}\sum_{i\in\Phi}\sum_{\substack{i'\in\Phi  \\  \setbuild{j}{i_j\neq i'_j}=J}}p_Jf(i)  \\
 & \le \sum_{\substack{J\subseteq[k]  \\  J\neq\emptyset,J\neq[k]}}\sum_{i\in\Phi}2^{\maxentropy(A_J|A_{\overline{J}})_\Phi}p_Jf(i).
\end{split}
\end{equation}
The second equality uses that the sum over unordered pairs is half the sum over distinct ordered pairs, and that the sum of $f(i)$ is equal to the sum of $f(i')$. In the next step the sum over $i'$ is split according to the location of the components shared with $i$.
\end{proof}

By construction, $\Gamma$ is a free diagonal in $\Phi$. The expected size of $\Gamma$ is a lower bound on the maximum size of the free diagonal. Applying the lower bound of Lemma~\ref{lem:meanbounds} for the function $f(i)=1$ leads to the estimate
\begin{equation}\label{eq:subranklowerbound}
\subrank(\Phi)\ge|\Phi|\left(p_{\emptyset}-\sum_{\substack{J\subseteq[k]  \\  J\neq\emptyset,J\neq[k]}}2^{\maxentropy(A_J|A_{\overline{J}})_\Phi}p_J\right).
\end{equation}

For lower bounding the asymptotic subrank, one chooses $\Phi=\Psi^{\times n}\cap(\typeclass{n}{P_1}\times\cdots\times\typeclass{n}{P_k})$ for some type classes where $P_j$ has a limit. In such a setting the conditional max-entropy grows linearly. Note also that $\log|\Phi|=\maxentropy(A_{[k]}|A_{\emptyset})_{\Phi}$.
\begin{lemma}\label{lem:maxentropyrate}
Let $\Psi\subseteq I_1\times\cdots\times I_k$ and $P\in\distributions(\Psi)$ (the set of probability distributions on $\Psi$). Let $\Phi^{(n)}=\Psi^{\times n}\cap(\typeclass{n}{P^{(n)}_1}\times\cdots\times\typeclass{n}{P^{(n)}_k})$ for some types $P^{(n)}\in\distributions[n](\Psi)$ such that $P^{(n)}\to P$. Then
\begin{equation}
h_J:=\lim_{n\to\infty}\frac{1}{n}\maxentropy(A_J|A_{\overline{J}})_{\Phi^{(n)}}=\max_{\substack{Q\in\distributions(\Psi)  \\  \forall j:Q_j=P_j}}\entropy(A_J|A_{\overline{J}})_Q.
\end{equation}
\end{lemma}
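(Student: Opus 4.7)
The plan is to evaluate both sides by the method of types. Every element of $\Phi^{(n)}$ is a $k$-tuple of length-$n$ sequences whose joint empirical distribution is some joint type $Q^{(n)}\in\distributions[n](\Psi)$ with the prescribed one-dimensional marginals $(Q^{(n)})_j=P^{(n)}_j$. Consequently $\Phi^{(n)}$ is the disjoint union of joint type classes $\typeclass{n}{Q^{(n)}}$ over such $Q^{(n)}$, and the $\pi_{\overline J}$-fiber above a fixed $\overline J$-sequence $y$ of type $R$ splits further into conditional type classes for those $Q^{(n)}$ with $(Q^{(n)})_{\overline J}=R$. The standard two-sided estimate $2^{n\entropy(A_J|A_{\overline J})_{Q^{(n)}}}/\mathrm{poly}(n)\le|\typeclass{n}{Q^{(n)}|R}(y)|\le 2^{n\entropy(A_J|A_{\overline J})_{Q^{(n)}}}$ then controls each fiber up to a polynomial factor.

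For the upper bound I would take the maximum over $y$ and sum over the polynomially many joint types with the prescribed one-dimensional marginals to obtain
\begin{equation*}
\frac{1}{n}\maxentropy(A_J|A_{\overline J})_{\Phi^{(n)}}\le\max_{Q^{(n)}}\entropy(A_J|A_{\overline J})_{Q^{(n)}}+O\!\left(\tfrac{\log n}{n}\right),
\end{equation*}
and then observe that any subsequential limit of such maximisers lies in $\distributions(\Psi)$ with marginals $P_j$, so by continuity of the conditional entropy this yields the $\limsup$-bound. For the matching lower bound I would pick a maximiser $Q^{\ast}$ on the right-hand side, produce a sequence of joint types $Q^{(n)}\in\distributions[n](\Psi)$ with $(Q^{(n)})_j=P^{(n)}_j$ and $Q^{(n)}\to Q^{\ast}$, choose any $y$ of type $(Q^{(n)})_{\overline J}$, and apply the lower estimate for the conditional type class.

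The hard part will be producing the approximating joint types with \emph{exactly} the prescribed marginals $P^{(n)}_j$. The transportation polytope $K^{(n)}:=\setbuild{Q\in\distributions(\Psi)}{Q_j=P^{(n)}_j\text{ for all }j}$ is nonempty (it contains $P^{(n)}$) and Hausdorff-converges to $K:=\setbuild{Q\in\distributions(\Psi)}{Q_j=P_j\text{ for all }j}$, so one can find $\tilde Q^{(n)}\in K^{(n)}$ with $\tilde Q^{(n)}\to Q^{\ast}$; the remaining step is to round $\tilde Q^{(n)}$ to a denominator-$n$ point inside $K^{(n)}$ without perturbing the marginals. I would handle this by first regularising $Q^{\ast}$ slightly into the relative interior of $K$ (replacing it by a convex combination with an interior point of $K$), which pushes the corresponding $\tilde Q^{(n)}$ into the relative interior of $K^{(n)}$ where denominator-$n$ lattice points of the marginal-constrained affine subspace are $O(1/n)$-dense, and then sending the regularisation to zero jointly with $n\to\infty$.
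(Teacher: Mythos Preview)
Your approach matches the paper's: decompose $\Phi^{(n)}$ into joint type classes, estimate the max-entropy of each via type counting, and pass to the limit by continuity. The paper's only variation is that instead of bounding conditional type classes directly, it observes via $S_n$-transitivity that every fibre of the projection $\typeclass{n}{Q}\to\typeclass{n}{Q_{\overline{J}}}$ has exactly $|\typeclass{n}{Q}|/|\typeclass{n}{Q_{\overline{J}}}|$ elements, which is your conditional type-class estimate without the polynomial slack. Where you actually go further than the paper is in the lower bound: the paper simply writes that the limit equals the right-hand side ``by continuity'' and does not construct types $Q^{(n)}\in\distributions[n](\Psi)$ with the exact one-dimensional marginals $P^{(n)}_j$ converging to a prescribed maximiser $Q^\ast$. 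Your plan for this step---Hausdorff convergence of the marginal-constrained polytopes together with $O(1/n)$-density of denominator-$n$ points in the rational affine subspace (the integer kernel of the marginal map being a full-rank sublattice of $\mathbb{Z}^{\Psi}$, so the covering radius scales as $1/n$ with a constant depending only on $\Psi$)---is sound and in fact supplies a detail the paper's proof leaves implicit.
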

\begin{proof}
$\Phi^{(n)}$ is a disjoint union of type classes
\begin{equation}
\Phi^{(n)}=\bigcup_Q\typeclass{n}{Q}
\end{equation}
where $Q\in\distributions[n](\Psi)$ such that $Q_j=P^{(n)}_j$ for all $j$. The exponentiated max-entropy is monotone in the support and subadditive under taking unions, therefore
\begin{equation}
\max_Q\maxentropy(A_J|A_{\overline J})_{\typeclass{n}{Q}}\le\maxentropy(A_J|A_{\overline J})_{\Phi^{(n)}}\le\max_Q\maxentropy(A_J|A_{\overline J})_{\typeclass{n}{Q}}+\log|\distributions[n](\Psi)|
\end{equation}
Consider the projection $(I_1\times\cdots\times I_k)^{\times n}\to(\prod_{j\in\overline J}I_j)^{\times n}$. This map is equivariant under the $S_n$ action permuting the factors. Its restriction to $\typeclass{n}{Q}$ is onto the set $\typeclass{n}{Q_{\overline J}}$, where $S_n$ acts transitively. Therefore
\begin{equation}
\maxentropy(A_J|A_{\overline J})_{\typeclass{n}{Q}}=\log\frac{|\typeclass{n}{Q}|}{|\typeclass{n}{Q_{\overline J}}|},
\end{equation}
so
\begin{equation}
n\entropy(A_J|A_{\overline{J}})_Q-\log|\distributions[n](\Psi)|\le\maxentropy(A_J|A_{\overline J})_{\typeclass{n}{Q}}\le n\entropy(A_J|A_{\overline{J}})_Q+\log|\distributions[n](\Psi)|.
\end{equation}
The claim follows by continuity and using $\log|\distributions[n](\Psi)|=o(n)$.
\end{proof}

The following example was the main motivation for our work, but it is not necessary for understanding our results. The reader may wish to skip to Example~\ref{ex:uniformrandom}, which is simpler and more similar to our main theorem.
\begin{example}[Asymptotic subrank of tight sets, $k=3$]\label{ex:tight}
This example is by Strassen from ref. \cite{strassen1991degeneration}, based on the ideas of ref. \cite{coppersmith1990matrix}. We present it in a different but equivalent form and omit some details.

Let $\Psi\subseteq I_1\times I_2\times I_3$ be tight, $P\in\distributions[n](\Psi)$ and take a sequence $P^{(n)}\in\distributions[n](\Psi)$ let $\Phi^{(n)}=\Psi^{\times n}\cap(\typeclass{n}{P^{(n)}_1}\times\typeclass{n}{P^{(n)}_2}\times\typeclass{n}{P^{(n)}_3})$. Without loss of generality assume that $P$ has maximal entropy given its marginal distributions. Then
\begin{subequations}\label{eq:tightgrowthrates}
\begin{align}
|\Phi^{(n)}| & =2^{n\entropy(P)+o(n)}  \\
\maxentropy(A_j|A_{\overline{j}})_{\Phi^{(n)}} & = 0  \\
\begin{split}
\maxentropy(A_{\overline{j}}|A_j)_{\Phi^{(n)}} & = n\left(\entropy(P)-\entropy(P_j)\right)+o(n)
\end{split}
\end{align}
\end{subequations}
by Lemma~\ref{lem:maxentropyrate} and tightness.

For a large prime $M\in\mathbb{N}$, draw $a_j:I_j^{\times n}\to\mathbb{Z}_M$ uniformly from the space of triples of functions satisfying $a_1(i_1)+a_2(i_2)=2a_3(i_3)$ for all $(i_1,i_2,i_3)\in\Phi^{(n)}$. Let $S\subseteq\mathbb{Z}_M$ be a subset of $\{0,1,\ldots,\frac{M-1}{2}\}$ without three-term arithmetic progressions, and let $W_j=a_j^{-1}(S)$. It can be shown that the joint distribution is homogeneous for $\Phi^{(n)}$. More precisely,
\begin{equation}
p_{\emptyset} = \frac{|S|}{M^2}\qquad\text{and}\qquad p_{\{1,2\}}=p_{\{1,3\}}=p_{\{2,3\}} = \frac{|S|}{M^3}
\end{equation}
Choose $M=2^{n(\entropy(P)-\min\{\entropy(P_1),\entropy(P_2),\entropy(P_3)\})+o(n)}$ and let $S$ be as large as possible. Then $|S|=M^{1-o(1)}$ as shown in ref. \cite{salem1942sets}. Inserting these as well as the asymptotics from \eqref{eq:tightgrowthrates} into \eqref{eq:subranklowerbound} gives
\begin{equation}
\begin{split}
\subrank(\Psi^{\times n})
 & \ge \subrank(\Phi^{(n)})  \\
 & \ge |\Phi^{(n)}|\left(\frac{|S|}{M^2}-2^{o(n)}\left(2^{n(\entropy(P)-\entropy(P_1))}+2^{n(\entropy(P)-\entropy(P_2))}+2^{n(\entropy(P)-\entropy(P_3))}\right)\frac{|S|}{M^3}\right)  \\
 & \ge |\Phi^{(n)}|\frac{|S|}{M^2}\left(1-2^{n(\entropy(P)-\min\{\entropy(P_1),\entropy(P_2),\entropy(P_3)\})+o(n)}\frac{1}{M}\right)  \\
 & = 2^{n\entropy(P)-(n(\entropy(P)-\min\{\entropy(P_1),\entropy(P_2),\entropy(P_3)\})+o(n))}  \\
 & =2^{n\min\{\entropy(P_1),\entropy(P_2),\entropy(P_3)\}+o(n)},
\end{split}
\end{equation}
which implies $\log\asymptoticsubrank(\Psi)\ge\min\{\entropy(P_1),\entropy(P_2),\entropy(P_3)\}$.
\end{example}

\begin{example}\label{ex:uniformrandom}
Let $\Phi\subseteq I_1\times\cdots\times I_k$ be arbitrary and choose $W_1,\ldots,W_k$ by including in $W_j$ each element of $I_j$ with probability $q_j$, independently of all other choices. Then
\begin{equation}
p_J=q_1q_2\cdots q_k\prod_{j\in J}q_j,
\end{equation}
where the first $k$ factors are the probabilities of the (independent) events that $i_j\in W_j$, while the remaining ones correspond to $i'_j\in W_j$ for $j\in J$. Therefore
\begin{equation}
\subrank(\Phi)\ge|\Phi|q_1q_2\cdots q_k\left(1-\sum_{\substack{J\subseteq[k]  \\  J\neq\emptyset,J\neq[k]}}2^{\maxentropy(A_J|A_{\overline{J}})_\Phi+\sum_{j\in J}\log q_j}\right).
\end{equation}

Now let $\Psi\subseteq I_1\times\cdots\times I_k$ be arbitrary, $P\in\distributions(\Psi)$ and $\Phi^{(n)}=\Psi^{\times n}\cap(\typeclass{n}{P^{(n)}_1}\times\cdots\times\typeclass{n}{P^{(n)}_k})$ for some distributions $P^{(n)}\in\distributions[n](\Psi)$ converging to $P$. Let $h_J$ be as in Lemma~\ref{lem:maxentropyrate} and choose $q_j$ as $q_j=2^{-nx_j}$ for some real numbers $x_j>0$. Then
\begin{equation}\label{eq:subrankfromuniformrandom}
\subrank(\Psi^{\times n})\ge\subrank(\Phi^{(n)})\ge 2^{n(h_{[k]}-\sum_{j=1}^k x_j)+o(n)}\left(1-\sum_{\substack{J\subseteq[k]  \\  J\neq\emptyset,J\neq[k]}}2^{n\left(h_J-\sum_{j\in J}x_j\right)+o(n)}\right).
\end{equation}
As long as $h_J<\sum_{j\in J}x_j$ for each subset $J$ (except for $\emptyset$ or $[k]$), the sum in the second factor of \eqref{eq:subrankfromuniformrandom} goes to $0$ as $n\to\infty$. If this holds, then we get the estimate
\begin{equation}
\log\asymptoticsubrank(\Psi)\ge\frac{1}{n}\log\subrank(\Psi^{\times n})\ge h_{[k]}-\sum_{j=1}^k x_j.
\end{equation}
\end{example}

\section{Distillation of asymptotically perfect GHZ states with LOCC}\label{sec:main}

\subsection{Proof strategy}

In this section we prove our main result (Theorem~\ref{thm:asymptotic}), a lower bound on the multiparty distillable entanglement ($\distillablerate$) of an arbitrary pure state, which can be seen as an analog of Example~\ref{ex:uniformrandom}.

Let $\ket{\varphi}\in\mathcal{H}_1\otimes\cdots\otimes\mathcal{H}_k$ with finite dimensional Hilbert spaces $\mathcal{H}_j$, assume that $\norm{\ket{\varphi}}=1$. Choose an orthonormal basis in each of the Hilbert spaces (henceforth identified as $\mathcal{H}_j\simeq\mathbb{C}^{I_j}$), and let $\Phi=\support\ket{\varphi}\subseteq I_1\times\cdots\times I_k$ be the support. The methods of Section~\ref{sec:subrank} can be applied to extract a generalized GHZ state if a local restriction to a subset $W$ is interpreted as performing a two-outcome measurement with projections $\Pi_W,I-\Pi_W$, where
\begin{equation}
\Pi_W=\sum_{i\in W}\ketbra{i}{i}.
\end{equation}
This leads to a generalized GHZ state of rank $Q(\Phi)$, but the success probability may be very low and there is no control on the coefficients of the resulting state, i.e. we only get an asymptotic SLOCC transformation. Our goal in this section is to improve the protocol such that the success probability is close to $1$.

To understand the reason for the loss of probability, first note that the randomized construction from Section~\ref{sec:subrank} actually gives rise to a two-step protocol. In the first step the parties project onto the subspaces generated by $W_1,\ldots,W_k$, while in the second step they project again onto $\pi_1(\Gamma),\ldots,\pi_k(\Gamma)$. Suppose for a moment that the magnitude of the nonzero coefficients of $\ket{\varphi}$ are the same (this is approximately true in a precise sense for e.g. large tensor powers), so that the success probability is proportional to the number of coefficients not projected out. Examining the bounds of Lemma~\ref{lem:meanbounds}, one can see that when the $p_J$ ($J\neq\emptyset$) are sufficiently small, then the size of $\Gamma$ is essentially the same as that of $\support\ket{\varphi}\cap(W_1\times\cdots\times W_k)$. Thus the probability of failure in the second step is negligible.

In order to achieve high probability in the first step, we replace the projection onto a subspace with a measurement with respect to a partition of the set of basis states. Let us choose set partitions $(W_{j,m_j})_{m_j=1}^{M_j}$ of $I_j$. For each $m=(m_1,\ldots,m_k)\in[M_1]\times\cdots\times[M_k]$ consider the graphs $G_m=(V_m,E_m)$ with $V_m=\Phi\cap(W_{1,m_1}\times\cdots\times W_{k,m_k})$ and
\begin{equation}
E_m=E\cap\binom{V_m}{2},
\end{equation}
i.e. two elements in the support are adjacent if they share at least one coordinate. Let $\Gamma_m$ be the set of isolated vertices in $G_m$. We use the following improved protocol.
\begin{enumerate}
\item At site $j$ perform a measurement according to the pairwise orthogonal projections
\begin{equation}
\Pi_{j,m_j}=\sum_{i\in W_{j,m_j}}\ketbra{i}{i}.
\end{equation}
\item If the outcomes are $m=(m_1,\ldots,m_k)$, then the resulting state has support $V_m$, which has a free diagonal $\Gamma_m$. Extract this diagonal by measuring the local projections
\begin{equation}
\sum_{i\in \pi_j(\Gamma_m)}\ketbra{i}{i}.
\end{equation}
\item If every measurement is succesful, then the resulting state is a generalized GHZ state.
\end{enumerate}
If the parties wish to distill (standard) GHZ states from many copies, then after running the above protocol many times, the obtained states can be converted into GHZ states at a rate given by the von~Neumann entropy of its reduced states, averaged over the measurement outcomes. One may think of this expected entropy as the asymptotic value of the ensemble.

Our goal is to show that a randomly chosen $k$-tuple of partitions leads to a good estimate on the asymptotic value of the diagonals. We will work explicitly with mixtures of pairwise orthogonal generalized GHZ states (a parameterization in terms of joint distributions is introduced below) to represent the extracted diagonal. The asymptotic value of such a state is given by the Shannon conditional entropy.

However, as we have already seen in Examples~\ref{ex:tight} and~\ref{ex:uniformrandom}, it can be advantageous to apply the protocol to a state (subset) which is not exactly a power in order to get the best asymptotic bound. In the case of subrank, the high power was first intersected with a product of type classes, which is clearly not affordable in the present case as it would already lead to an asymptotically vanishing success probability. One could intersect instead with typical subsets corresponding to the marginals, but it is possible to do better, since we allow approximate transformations. The conditional max-entropies appearing in the lower bound in Lemma~\ref{lem:meanbounds} can be lowered to approximately the Shannon conditional entropy if we work with a nearby state instead.

One proof strategy would be to derive a lower bound on the Shannon conditional entropy (fairly simple using Lemma~\ref{lem:meanbounds}) and then use the asymptotic equipartition property (AEP) to get an asymptotic statement. The difficulty is that if we apply the protocol to a state deviating slightly from $\ketbra{\psi}{\psi}^{\otimes n}$ and consider many copies of the resulting state, then a qualitative AEP is not useful since the small error gets amplified as we take more and more copies. It is possible to remedy the situation with a quantitative form of the AEP (e.g. the one from \cite{holenstein2011randomness}), but this approach leads to a fairly complicated proof and no useful single-shot bound.

Instead, we derive a lower bound on the (smooth) conditional min-entropy, which governs the number of GHZ states that can be (approximately) extracted in a one-shot setting (see Lemma~\ref{lem:exactfrommixture} for a precise statement). Since its definition involves an optimization over the measurement outcomes, estimating it directly in such a probabilistic setting seems difficult. To circumvent this problem, we use the conditional R\'enyi entropy as an intermediate quantity, which is more convenient to use. In particular, the expected value can again be estimated using Lemma~\ref{lem:meanbounds}.

It is conceivable that with some probability (over the choice of the random set partitions) the conditional min-entropy itself is large, which would lead to our result more directly. However, proving this (if true) would likely involve a measure concentration argument which relies on some strong independence property of the randomly chosen partitions. In contrast, our proof assumes only a homogeneity property (Definition~\ref{def:homogeneous}), which is a condition involving at most $2k$th moments (of the indicator functions of subset memberships), therefore works for a large class of joint distributions.

\subsection{Notations}\label{sec:notations}

We let $\substates(\mathcal{H})=\setbuild{\rho\in\boundeds(\mathcal{H})}{\rho\ge 0,\Tr\rho\le 1}$ be the set of subnormalized states on the Hilbert space $\mathcal{H}$, while the set of normalized states is $\states(\mathcal{H})=\setbuild{\rho\in\boundeds(\mathcal{H})}{\rho\ge 0,\Tr\rho=1}$. For $\rho,\sigma\in\substates(\mathcal{H})$, the purified distance is defined as $\purifieddistance(\rho,\sigma)=\sqrt{1-\fidelity(\rho,\sigma)^2}$, where
\begin{equation}\label{eq:fidelity}
\fidelity(\rho,\sigma)=\sqrt{(1-\Tr\rho)(1-\Tr\sigma)}+\Tr\sqrt{\sigma^{1/2}\rho\sigma^{1/2}},
\end{equation}
is the generalized fidelity \cite[Definitions 2. and 4.]{tomamichel2010duality}. The closed $\epsilon$-ball around a state $\rho$ is $\ball{\epsilon}{\rho}=\setbuild{\rho'\in\substates(\mathcal{H})}{\purifieddistance(\rho,\rho')\le\epsilon}$. We will also write $\rho\approx_\epsilon\sigma$ if $\purifieddistance(\rho,\sigma)\le\epsilon$. 

Probability distributions on a finite set $\mathcal{X}$ will be identified with those states on $\mathbb{C}^{\mathcal{X}}$ which are diagonal in the standard basis. The diagonal elements in $\substates(\mathbb{C}^{\mathcal{X}})$ will be denoted by $\subdistributions(\mathcal{X})$. When considering probability distributions, the $\epsilon$-ball is understood to be $\ball{\epsilon}{P}=\setbuild{P'\in\subdistributions(\mathcal{X})}{\purifieddistance(P,P')\le\epsilon}$.

For multipartite states $\rho\in\substates(\mathcal{H}_1\otimes\cdots\otimes\mathcal{H}_k)$ and $\sigma\in\substates(\mathcal{K}_1\otimes\cdots\otimes\mathcal{K}_k)$ we write $\rho\loccto[\epsilon]\sigma$ if there is a trace-nonincreasing LOCC channel $\Lambda$ such that $\Lambda(\rho)\in\ball{\epsilon}{\sigma}$, while $\rho\loccto\sigma$ means that there is a $\Lambda$ such that $\Lambda(\rho)=\sigma$. Note in particular that $\rho\approx_\epsilon\sigma$ implies $\rho\loccto[\epsilon]\sigma$, since the identity is an LOCC channel. Trace-nonincreasing channels are contractions with respect to the purified distance, therefore the relations $\loccto[\epsilon]$ enjoy a transitivity-like property:
\begin{equation}
(\rho\loccto[\epsilon_1]\sigma\text{ and }\sigma\loccto[\epsilon_2]\tau)\implies(\rho\loccto[\epsilon_1+\epsilon_2]\tau).
\end{equation}
We define the distillable entanglement as
\begin{equation}
\distillablerate(\rho)=\lim_{\epsilon\to 0}\limsup_{n\to\infty}\max\setbuild{\frac{N}{n}}{\rho^{\otimes n}\loccto[\epsilon]\GHZ^{\otimes N}}.
\end{equation}

For joint distributions $P_{XY}\in\distributions(\mathcal{X}\times\mathcal{Y})$, one of the several different notions of the conditional R\'enyi entropy is defined as \cite{arimoto1977information} (see also \cite[Definition 5.2.]{tomamichel2015quantum})
\begin{equation}
\upentropy[\alpha](X|Y)_P=\sup_{Q\in\distributions(\mathcal{Y})}\frac{1}{1-\alpha}\log\left(\sum_{\substack{x\in\mathcal{X}  \\  y\in\mathcal{Y}}}P_{XY}(x,y)^\alpha Q(y)^{1-\alpha}\right).
\end{equation}

\subsection{Proof of the main result}

The protocol explained above leads to a random GHZ-like state. We find it convenient to work with such an output ensemble as a mixed state where the different outcomes are distinguished by classical flags, available to every party. First we define a parameterization of such mixtures by joint distributions.
\begin{definition}\label{def:randomGHZ}
For a nonnegative function $P_{XY}:\mathcal{X}\times\mathcal{Y}\to\mathbb{R}_+$ where $\mathcal{X}$ and $\mathcal{Y}$ are finite sets, we define the unnormalized state
\begin{equation}
\GHZ_{P_{XY}}=\sum_{\substack{y\in\mathcal{Y}  \\  x,x'\in\mathcal{X}}}\sqrt{P_{XY}(x,y)P_{XY}(x',y)}\ketbra{(xy)\ldots(xy)}{(x'y)\ldots(x'y)}
\end{equation}
on the Hilbert space $\left(\mathbb{C}^{\mathcal{X}\times\mathcal{Y}}\right)^{\otimes k}$. States of this form will be referred to as random GHZ states.

A distribution $P\in\distributions(\mathcal{X})$ can be identified with one on $\mathcal{X}\times\{0\}$, and the corresponding (pure) generalized GHZ state will be denoted by $\GHZ_P$. We will write $\GHZ$ to mean $\GHZ_P$ with $P$ the uniform distribution on $\mathcal{X}=\{0,1\}$.
\end{definition}
The role of $X$ and $Y$ in the definition is not symmetric, but reflects the quantum-classical splitting of the state. The second marginal $P_Y$ (if present) encodes the weights in the classical mixture, while the conditional states are pure generalized GHZ states with coefficients given by $\sqrt{P_{X|Y=y}}$.

Asymptotically, $\GHZ_{P_{XY}}$ is equivalent to $\entropy(X|Y)_P$ copies of the $\GHZ$ state. In the single shot regime, the Shannon entropy is not meaningful, but the R\'enyi entropies with $\alpha>1$ can be used to bound the number of $\GHZ$ states that can be approximately extracted from $\GHZ_{P_{XY}}$. This is made precise in the following lemma (the proof is in Section~\ref{sec:GHZmixtures}).
\begin{lemma}\label{lem:approximatefrommixture}
For any $P_{XY}\in\distributions(\mathcal{X}\times\mathcal{Y})$, $\epsilon\in(0,1)$ and $\alpha>1$ the relation $\GHZ_{P_{XY}}\loccto[\epsilon]\GHZ^{\otimes N}$ holds with
\begin{equation}
N=\left\lfloor\upentropy[\alpha](X|Y)_P-\left(1+\frac{1}{\alpha-1}\right)\log\frac{10}{\epsilon^2}\right\rfloor
\end{equation}
\end{lemma}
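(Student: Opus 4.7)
The plan is to reduce Lemma~\ref{lem:approximatefrommixture} to the exact distillation statement (Lemma~\ref{lem:exactfrommixture}) via a truncation argument on the joint distribution, controlling the resulting perturbation in purified distance by the Arimoto conditional R\'enyi entropy. Define the subdistribution $\tilde P_{XY}$ by setting $\tilde P_{XY}(x,y):=P_{XY}(x,y)$ whenever $P_{X|Y}(x|y)\le 2^{-N}$, and $\tilde P_{XY}(x,y):=0$ otherwise. At the state level, $\tilde P_{XY}$ is produced by the LOCC filter $\rho\mapsto F_y^{\otimes k}\rho F_y^{\otimes k}$, conditioned on the shared classical flag $y$, where $F_y:=\sum_{x:\,P_{X|Y}(x|y)\le 2^{-N}}\ketbra{x}{x}$; the output is $\GHZ_{\tilde P_{XY}}$. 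Since the conditional components of $\GHZ_{\tilde P_{XY}}$ satisfy a per-$y$ min-entropy bound, they meet the hypothesis of Lemma~\ref{lem:exactfrommixture}, which then converts $\GHZ_{\tilde P_{XY}}$ to $\GHZ^{\otimes N}$ (or a controlled subnormalized version thereof) without further error.

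Both $\GHZ_{P_{XY}}$ and $\GHZ_{\tilde P_{XY}}$ are block-diagonal in the shared $Y$-register, with each $y$-block a rank-one unnormalized pure state differing only by the removal of some components. A direct evaluation of the generalized fidelity~\eqref{eq:fidelity} using this block structure yields $\fidelity(\GHZ_{P_{XY}},\GHZ_{\tilde P_{XY}})=1-\Delta$, where $\Delta:=\sum_{(x,y):\,P_{X|Y}(x|y)>2^{-N}}P_{XY}(x,y)$ is the total truncated probability mass. Consequently $\purifieddistance(\GHZ_{P_{XY}},\GHZ_{\tilde P_{XY}})\le\sqrt{2\Delta}$, and the problem reduces to bounding $\Delta$ in terms of $\upentropy[\alpha](X|Y)_P$ and $N$.

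The key estimate on $\Delta$ comes from H\"older's inequality. For each fixed $y$, let $T_y:=\{x:P_{X|Y}(x|y)>2^{-N}\}$; since each element of $T_y$ contributes more than $2^{-N}$ to a sum at most one, we have $|T_y|\le 2^N$. Applying H\"older with conjugate exponents $\alpha$ and $\alpha/(\alpha-1)$ gives
\begin{equation*}
\sum_{x\in T_y}P_{X|Y}(x|y)\le\|P_{X|Y=y}\|_\alpha\cdot|T_y|^{(\alpha-1)/\alpha}\le\|P_{X|Y=y}\|_\alpha\cdot 2^{N(\alpha-1)/\alpha}.
\end{equation*}
Averaging over $y$ against $P_Y$ and invoking the Arimoto identity $\sum_y P_Y(y)\|P_{X|Y=y}\|_\alpha=2^{-\beta\,\upentropy[\alpha](X|Y)_P}$ with $\beta:=(\alpha-1)/\alpha$, we obtain $\Delta\le 2^{-\beta(\upentropy[\alpha](X|Y)_P-N)}$. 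Requiring $\sqrt{2\Delta}\le\epsilon$ is then equivalent to $\upentropy[\alpha](X|Y)_P-N\ge(1+\tfrac{1}{\alpha-1})\log(2/\epsilon^2)$, which yields the stated bound up to the replacement of $2$ by $10$; the extra slack absorbs any additional constant loss incurred when invoking the exact-conversion lemma (e.g.\ from the mismatch between the threshold $2^{-N}$ and the per-block renormalization factor $P_Y(y)/\tilde P_Y(y)\ge 1$).

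The main obstacle is to recover the Arimoto (up-arrow) R\'enyi entropy on the right-hand side, rather than the weaker down-arrow variant. A direct Markov estimate $\sum_{x\in T_y}P_{X|Y}(x|y)\le 2^{N(\alpha-1)}\sum_xP_{X|Y}(x|y)^\alpha$ would only produce the smaller quantity $\sum_y P_Y(y)\|P_{X|Y=y}\|_\alpha^\alpha$, which after taking logarithms yields a bound involving $H_\alpha^\downarrow(X|Y)_P\le\upentropy[\alpha](X|Y)_P$. The crucial additional ingredient is the geometric size bound $|T_y|\le 2^N$, which combined with H\"older reduces the $\alpha$-th power of the norm to its first power, matching precisely the normalization in the Arimoto form and delivering the optimal coefficient $\alpha/(\alpha-1)=1+1/(\alpha-1)$ in front of the logarithm.
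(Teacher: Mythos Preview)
Your H\"older estimate on the truncated mass $\Delta$ is correct and elegant, and it does recover the Arimoto form with the right coefficient $\alpha/(\alpha-1)$. The gap is in the step where you invoke Lemma~\ref{lem:exactfrommixture} on $\tilde P_{XY}$. That lemma requires the \emph{down-arrow} min-entropy $\downentropy[\infty](X|Y)$, i.e.\ $\max_x \tilde P_{X|Y}(x|y)\le 2^{-N}$ for every $y$. After truncation and renormalization one has $\tilde P_{X|Y}(x|y)=P_{X|Y}(x|y)/s_y$ with $s_y=\sum_{x\notin T_y}P_{X|Y}(x|y)\le 1$, so the bound is only $\le 2^{-N}/s_y$. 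The deficit $\log(1/s_y)$ is \emph{not} a constant: for a block where most mass lies above the threshold (e.g.\ $P_{X|Y=y}=(0.6,0.4)$ with $N=1$) the renormalized conditional becomes $(0,1)$, with min-entropy $0$. So the claim that the slack $2\to 10$ absorbs this is unjustified.

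What your construction actually proves is $\upentropy[\infty](X|Y)_{\tilde P}\ge N$ (take $Q=P_Y$ in the supremum, using $\tilde P_{XY}(x,y)\le 2^{-N}P_Y(y)$), hence $\minentropy[\sqrt{2\Delta}](X|Y)_P\ge N$. Passing from this to the $\downentropy[\infty]$-based quantity $\altminentropy[\epsilon]$ needed for Lemma~\ref{lem:exactfrommixture} is precisely the content of the inequality~\eqref{eq:altminentropyfromminentropy} that the paper invokes (via Lemma~\ref{lem:altminentropyfromRenyientropy}). Concretely, one needs a \emph{second} truncation removing the $y$-blocks with small $s_y$ (say $s_y<1/2$; their total $P_Y$-mass is at most $2\Delta$ by Markov), after which the surviving blocks do satisfy a uniform conditional bound $\le 2^{-N+1}$. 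With that additional step your argument becomes a self-contained alternative to the paper's citation-based proof, but as written the renormalization issue is a genuine hole, not a constant to be swept into the slack.
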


\begin{lemma}\label{lem:todiagonals}
Let $\ket{\varphi}$ be a unit vector in $\mathbb{C}^{I_1}\otimes\cdots\otimes\mathbb{C}^{I_k}$, let $(W_{j,m_j})_{m_j=1}^{M_j}$ a partition of $I_j$ for $j=1,\ldots,k$, and let $\Gamma_m=\Gamma_{m_1\ldots m_k}$ be a free diagonal in $(\support\ket{\varphi})\cap(W_{1,m_1}\times\cdots\times W_{k,m_k})$, and let $R\in\distributions(\mathcal{Y})$ be a ``reference'' distribution with $\mathcal{Y}=([M_1]\times\cdots\times [M_k])\cup\{*\}$. Then for any $\alpha>1$ there exists a probability distribution $P_{XY}$ on $\mathcal{X}\times\mathcal{Y}$ (for some finite set $\mathcal{X}$) such that $\ketbra{\varphi}{\varphi}\loccto\GHZ_{P_{XY}}$ and
\begin{equation}\label{eq:achievablevalue}
\upentropy[\alpha](X|Y)_P\ge\frac{1}{1-\alpha}\log\left[\sum_{m}\sum_{i\in\Gamma_m}|\varphi_i|^{2\alpha}R(m)^{1-\alpha}+\left(1-\sum_{m}\sum_{i\in\Gamma_m}|\varphi_i|^{2}\right)^\alpha R(*)^{1-\alpha}\right]
\end{equation}
\end{lemma}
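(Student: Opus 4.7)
The plan is to realize the two-step measurement protocol sketched in Section~\ref{sec:main} as an exact LOCC channel whose output has the form $\GHZ_{P_{XY}}$, with $P_{XY}$ read off from the outcome statistics, and then to lower bound $\upentropy[\alpha](X|Y)_P$ by specializing the supremum over $Q$ in its definition to the given reference distribution $Q=R$.

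Concretely, I would set $\mathcal{X}=\{1,\ldots,\max_m|\Gamma_m|\}$, fix a bijection $\sigma_m\colon\Gamma_m\to\{1,\ldots,|\Gamma_m|\}$ for each $m\in[M_1]\times\cdots\times[M_k]$, and define $P_{XY}(x,m)=|\varphi_{\sigma_m^{-1}(x)}|^2$ when $x\le|\Gamma_m|$ (and zero otherwise), together with $P_{XY}(x,*)=\delta_{x,1}\bigl(1-\sum_m\sum_{i\in\Gamma_m}|\varphi_i|^2\bigr)$. The LOCC channel proceeds in four steps. (i)~Party $j$ performs the measurement $\{\Pi_{j,m_j}\}_{m_j=1}^{M_j}$ and broadcasts the outcome $m_j$. (ii)~Conditional on $m=(m_1,\ldots,m_k)$, party $j$ performs the two-outcome measurement $\{\Pi_j^{(m)},I-\Pi_j^{(m)}\}$ with $\Pi_j^{(m)}=\sum_{i_j\in\pi_j(\Gamma_m)}\ketbra{i_j}{i_j}$. (iii)~If every party obtains $\Pi_j^{(m)}$, each party relabels its local register via $\sigma_m\circ(\pi_j|_{\Gamma_m})^{-1}\colon\pi_j(\Gamma_m)\to\{1,\ldots,|\Gamma_m|\}$, party $1$ applies a diagonal unitary erasing the phases $\arg\varphi_i$, and every party appends a classical register in state $\ket{m}$. (iv)~Otherwise, the parties trace out their local state and prepare the product state $\ket{(1,*)}^{\otimes k}$. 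Because $\Gamma_m$ is a free diagonal in $(\support\ket{\varphi})\cap(W_{1,m_1}\times\cdots\times W_{k,m_k})$, step (ii) projects the state onto $\sum_{i\in\Gamma_m}\varphi_i\ket{i_1\cdots i_k}$, and the joint probability of that entire outcome sequence is exactly $P_Y(m)=\sum_{i\in\Gamma_m}|\varphi_i|^2$. After step (iii) this branch contributes the pure substate $\sum_{x\le|\Gamma_m|}\sqrt{P_{XY}(x,m)}\ket{(x,m)}^{\otimes k}$ (with squared norm $P_Y(m)$); summing over $m$ and adding the failure branch reproduces $\GHZ_{P_{XY}}$ verbatim.

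For the entropy bound, the definition of $\upentropy[\alpha]$ as a supremum immediately yields $\upentropy[\alpha](X|Y)_P\ge\frac{1}{1-\alpha}\log\sum_{x,y}P_{XY}(x,y)^\alpha R(y)^{1-\alpha}$, and direct substitution of the explicit formula for $P_{XY}$ collapses the inner $x$-sum at $y=m$ to $\sum_{i\in\Gamma_m}|\varphi_i|^{2\alpha}$ and at $y=*$ to $\bigl(1-\sum_m\sum_{i\in\Gamma_m}|\varphi_i|^2\bigr)^\alpha$, producing exactly~\eqref{eq:achievablevalue}. The only real subtlety is the phase cancellation in step (iii): the target $\GHZ_{P_{XY}}$ has nonnegative amplitudes $\sqrt{P_{XY}(x,m)}$, whereas the post-measurement state carries the original complex coefficients $\varphi_i$. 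Injectivity of $\pi_1|_{\Gamma_m}$ is precisely what makes $\arg\varphi_i$ a well-defined function of party $1$'s local index once $m$ is known, so a single local diagonal unitary at site $1$ removes all phases simultaneously. Everything else in the argument is routine bookkeeping.
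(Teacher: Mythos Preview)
Your proof is correct and follows essentially the same approach as the paper's: the same two-step LOCC protocol (local partition measurement, then projection onto the free diagonal with a failure flag), the same phase-erasure argument via injectivity of $\pi_1|_{\Gamma_m}$, and the same entropy bound obtained by specializing the supremum in $\upentropy[\alpha]$ to $Q=R$. The only cosmetic difference is your choice of $\mathcal{X}=\{1,\ldots,\max_m|\Gamma_m|\}$ with bijections $\sigma_m$, whereas the paper takes $\mathcal{X}=(I_1\times\cdots\times I_k)\cup\{0\}$ and keeps the original index $i$ as the $X$-label; both yield the same $P_{XY}$ up to relabeling.
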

\begin{proof}
We use the protocol explained above. For each $j\in[k]$, the $j$th party performs a measurement with operators
\begin{equation}
\Pi_{j,m_j}=\sum_{i\in W_{j,m_j}}\ketbra{i}{i},
\end{equation}
and broadcasts the outcome $m_j$ to all the other parties. This results in the state
\begin{equation}
\sum_{m}(\Pi_{1,m_1}\otimes\cdots\otimes\Pi_{k,m_k})\ketbra{\varphi}{\varphi}(\Pi_{1,m_1}\otimes\cdots\otimes\Pi_{k,m_k})\otimes\ketbra{mm\ldots m}{mm\ldots m}.
\end{equation}
In the next step, every party performs a two-outcome measurement conditioned on the measurement result $m$. One of the operators at party $j$ is
\begin{equation}
\sum_{i\in\Gamma_m}\ketbra{\pi_j(i)}{\pi_j(i)},
\end{equation}
associated with the outcome ``success'', while the other outcome is interpreted as ``failure''. $\Gamma_m$ is a free diagonal in the support of the conditional state, therefore if every measurement is successful, then after applying the local partial isometries
\begin{equation}
\sum_{i\in\Gamma_m}\ketbra{i}{\pi_j(i)},
\end{equation}
the parties end up with a generalized GHZ state. The phases of the coefficients can now be adjusted with a diagonal unitary applied by any of the parties. From now on we assume that the resulting coefficients are nonnegative real numbers. In this case, they keep the flag $m$ and the protocol is finished. Otherwise, if any of the measurements fails, they discard the measurement result and prepare the separable state
\begin{equation}
\ketbra{00\ldots 0}{00\ldots 0}\otimes\ketbra{**\ldots*}{**\ldots *},
\end{equation}
where the second factor represents the new value of the flag. The protocol clearly implements the transformation $\ketbra{\varphi}{\varphi}\loccto\GHZ_{P_{XY}}$ where $\mathcal{X}=(I_1\times\cdots\times I_k)\cup\{0\}$, and $P\in\distributions(\mathcal{X}\times\mathcal{Y})$ is defined as
\begin{equation}
P_{XY}(x,y)=\begin{cases}
|\varphi_x|^2  &  \text{if $y\in [M_1]\times\cdots\times [M_k]$ and $x\in\Gamma_y$}  \\
1-\sum_{m}\sum_{i\in\Gamma_m}|\varphi_i|^{2} & \text{if $y=*$ and $x=0$}  \\
0 & \text{otherwise.}
\end{cases}
\end{equation}

Finally, $\upentropy[\alpha](X|Y)_P$ is defined as a supremum over distributions on $\mathcal{Y}$, therefore any given $R\in\distributions(\mathcal{Y})$ provides a lower bound as stated.
\end{proof}

Now we can prove the single-shot form of our main theorem.
\begin{theorem}\label{thm:oneshot}
Let $\ket\varphi\in\mathbb{C}^{I_1}\otimes\cdots\otimes\mathbb{C}^{I_k}$ be a unit vector and $Q(i_1,\ldots,i_k)=|\varphi_{i_1\ldots i_k}|^2$ the associated probability distribution, considered to be the joint distribution of random variables $A_1,\ldots,A_k$. Let $M_1,\ldots,M_k\ge 1$ be integers and define
\begin{equation}
\Delta=-k+\min_{\substack{J\subseteq[k]  \\  J\neq\emptyset,J\neq[k]}}\left(\sum_{j\in J}\log M_j-\maxentropy(A_J|A_{\overline{J}})_Q\right).
\end{equation}
Then for any $\epsilon\in(0,1)$ and $\alpha>1$ the relation $\ketbra{\varphi}{\varphi}\loccto[\epsilon]\GHZ^{\otimes N}$ holds with
\begin{equation}\label{eq:oneshotbound}
N=\left\lfloor \frac{\alpha}{1-\alpha}\log\left(2^{\frac{1-\alpha}{\alpha}\left(\entropy_\alpha(Q)-\sum_{j=1}^k\log M_j\right)}+2^{-\Delta/\alpha}\right)-\left(1+\frac{1}{\alpha-1}\right)\log\frac{10}{\epsilon^2}\right\rfloor.
\end{equation}
\end{theorem}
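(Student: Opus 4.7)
The plan is to apply the protocol from Lemma~\ref{lem:todiagonals} to a random choice of partitions and use a probabilistic argument, bounding expectations via Lemma~\ref{lem:meanbounds}, to produce a single realization with large enough conditional R\'enyi entropy; the resulting random GHZ state is then converted to standard GHZ states via Lemma~\ref{lem:approximatefrommixture}. Concretely, I would draw the partitions $(W_{j,m_j})_{m_j=1}^{M_j}$ of $I_j$ independently across $j$, by assigning each element of $I_j$ uniformly at random to one of the $M_j$ blocks. For a fixed outcome $m=(m_1,\ldots,m_k)$ one obtains $p_\emptyset=1/M$ and $p_J=M^{-1}\prod_{j\in J}M_j^{-1}$ (with $M=\prod_jM_j$), independent of $m$, so the joint distribution is homogeneous for $\Phi=\support\ket{\varphi}$ in the sense of Definition~\ref{def:homogeneous}.

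To apply Lemma~\ref{lem:todiagonals} I would fix the reference distribution $R$ to be $R(m)=(1-r)/M$ on each $m\in[M_1]\times\cdots\times[M_k]$ and $R(*)=r$, where $r\in(0,1)$ is a parameter to be optimized. Writing $S$ for the argument of the logarithm in \eqref{eq:achievablevalue} and taking expectation over the random partitions, I would bound the two contributions separately. Applying the upper bound of Lemma~\ref{lem:meanbounds} with $f(i)=|\varphi_i|^{2\alpha}$ and summing over $m$ gives $\mean\sum_m\sum_{i\in\Gamma_m}|\varphi_i|^{2\alpha}\le 2^{(1-\alpha)\entropy_\alpha(Q)}$, so the first term of $\mean S$ is at most $(1-r)^{1-\alpha}A^\alpha$ with $A=2^{\frac{1-\alpha}{\alpha}(\entropy_\alpha(Q)-\sum_j\log M_j)}$. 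For the second term, the elementary inequality $(1-x)^\alpha\le 1-x$ (valid for $x\in[0,1]$, $\alpha\ge 1$) combined with the lower bound of Lemma~\ref{lem:meanbounds} applied with $f(i)=|\varphi_i|^2$ controls $\mean(1-\sum_m\sum_{i\in\Gamma_m}|\varphi_i|^2)^\alpha$ by $\sum_{J\neq\emptyset,[k]}2^{\maxentropy(A_J|A_{\overline J})_Q-\sum_{j\in J}\log M_j}\le 2^{-\Delta}=B^\alpha$ with $B=2^{-\Delta/\alpha}$, since each summand is at most $2^{-k-\Delta}$ and there are fewer than $2^k$ of them; the second term is therefore bounded by $r^{1-\alpha}B^\alpha$.

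A short calculus exercise minimizes $(1-r)^{1-\alpha}A^\alpha+r^{1-\alpha}B^\alpha$ over $r\in(0,1)$: setting $r=B/(A+B)$ collapses the two contributions to $A(A+B)^{\alpha-1}$ and $B(A+B)^{\alpha-1}$ respectively, whose sum is $(A+B)^\alpha$. By the probabilistic method there exists a realization of the random partitions for which $S\le(A+B)^\alpha$; Lemma~\ref{lem:todiagonals} then produces an exact LOCC transformation $\ketbra{\varphi}{\varphi}\loccto\GHZ_{P_{XY}}$ with $\upentropy[\alpha](X|Y)_P\ge\frac{1}{1-\alpha}\log(A+B)^\alpha=\frac{\alpha}{1-\alpha}\log(A+B)$. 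Substituting this lower bound into Lemma~\ref{lem:approximatefrommixture} delivers $\GHZ_{P_{XY}}\loccto[\epsilon]\GHZ^{\otimes N}$ for the $N$ stated in \eqref{eq:oneshotbound}, and composing the two LOCC steps finishes the argument. The delicate step is the tuning of the reference distribution: only with $r=B/(A+B)$ do the two error contributions combine into the clean exponentiated sum $(A+B)^\alpha$, which is what yields the closed form of $N$ in the theorem; a naive uniform choice of $R$ would give a strictly weaker bound.
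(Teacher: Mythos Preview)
Your proposal is correct and follows essentially the same approach as the paper: the same random partitions, the same reference distribution $R$ with free parameter $r$, the same use of the upper and lower bounds of Lemma~\ref{lem:meanbounds} with $f=|\varphi|^{2\alpha}$ and $f=|\varphi|^{2}$ respectively, the same optimal choice $r=B/(A+B)$, and the same conclusion via Lemma~\ref{lem:approximatefrommixture}. The only cosmetic difference is that the paper first invokes the probabilistic method and then optimizes over $r$, while you optimize first; since $r$ is deterministic this is immaterial.
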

\begin{proof}
For each $j\in[k]$ choose the partition $(W_{j,m_j})_{m_j=1}^{M_j}$ randomly in the following way: for each element in $I_j$, draw a label uniformly at random from $[M_j]$ independently of all other choices, and let $W_{j,m_j}$ be the set of elements having label $m_j$. Let $V_m=(\support\ket{\varphi})\cap(W_{1,m_1}\times\cdots\times W_{k,m_k})$ (where $m=(m_1,\ldots,m_k)$), and consider the set $\Gamma_m$ of isolated vertices in the graph $G_m=(V_m,E_m)$, where the edges are the colliding pairs. We will use Lemma~\ref{lem:todiagonals} with these partitions and diagonals and with the probability distribution $R(m)=(1-r)(M_1M_2\cdots M_k)^{-1}$, $R(*)=r$ for some $r\in(0,1)$ chosen later. In the following we derive a lower bound on the mean of the right hand side of \eqref{eq:achievablevalue}. Note that for any $m$ the joint distribution of $(W_{1,m_1},\cdots,W_{k,m_k})$ is homogeneous for $\support\ket{\varphi}$ with $p_J=(M_1\cdots M_k)^{-1}\prod_{j\in J}M_j^{-1}$.

First note that $x\mapsto-\frac{1}{\alpha-1}\log x$ is convex and decreasing, therefore it is enough to find an upper bound on the expected value of its argument,
\begin{equation}\label{eq:suminlog}
\sum_m\sum_{i\in\Gamma_m}Q(i)^\alpha\left(\frac{1-r}{M_1M_2\cdots M_k}\right)^{1-\alpha}+\left(1-\sum_m\sum_{i\in\Gamma_m}Q(i)\right)^\alpha r^{1-\alpha}.
\end{equation}

For the first term, we use the upper bound from Lemma~\ref{lem:meanbounds} (with $\Phi=\support\ket{\varphi}$ and $f(i)=Q(i)^\alpha$):
\begin{equation}\label{eq:firstterm}
\begin{split}
\mean\sum_m\sum_{i\in\Gamma_m}Q(i)^\alpha\left(\frac{1-r}{M_1M_2\cdots M_k}\right)^{1-\alpha}
 & =\left(\frac{1-r}{M_1M_2\cdots M_k}\right)^{1-\alpha}\sum_m\mean\sum_{i\in\Gamma_m}Q(i)^\alpha  \\
 & \le\left(\frac{1-r}{M_1M_2\cdots M_k}\right)^{1-\alpha}\sum_m p_{\emptyset}\sum_{i\in\support\ket{\varphi}}Q(i)^\alpha  \\
 & =\left(\frac{1-r}{M_1M_2\cdots M_k}\right)^{1-\alpha}\sum_{i\in\support\ket{\varphi}}Q(i)^\alpha  \\
 & = (1-r)^{1-\alpha}2^{(1-\alpha)\left(\entropy_\alpha(Q)-\sum_{j=1}^k\log M_j\right)}.
\end{split}
\end{equation}

To bound the second term, we use that $x^\alpha\le x$ (when $x\in[0,1]$) to get
\begin{equation}\label{eq:secondtermA}
\begin{split}
\mean\left(1-\sum_m\sum_{i\in\Gamma_m}Q(i)\right)^\alpha r^{1-\alpha}
 & \le r^{1-\alpha}\mean\left(1-\sum_m\sum_{i\in\Gamma_m}Q(i)\right)  \\
 & =r^{1-\alpha}\left(1-\sum_m\mean\sum_{i\in\Gamma_m}Q(i)\right),
\end{split}
\end{equation}
and the lower bound from Lemma~\ref{lem:meanbounds} (with $f(i)=Q(i)$), which implies
\begin{equation}\label{eq:secondtermB}
\begin{split}
\mean\sum_{i\in\Gamma_m}Q(i)
 & \ge \frac{1}{M_1M_2\cdots M_k}\left(1-\sum_{\substack{J\subseteq[k]  \\  J\neq\emptyset,J\neq[k]}}2^{\maxentropy(A_J|A_{\overline{J}})_Q-\sum_{j\in J}\log M_j}\right)\sum_{i\in\support\ket{\varphi}}Q(i)  \\
 & \ge \frac{1}{M_1M_2\cdots M_k}(1-2^{-\Delta}).
\end{split}
\end{equation}
Here we used that $2^k$ is an upper bound on the number of subsets, the definition of $\Delta$ and that $\sum_i Q(i)=1$. Next we combine eqs.~\eqref{eq:secondtermA} and~\eqref{eq:secondtermB}, using that the $m$ takes $M_1M_2\cdots M_k$ different values:
\begin{equation}\label{eq:secondterm}
\mean\left(1-\sum_m\sum_{i\in\Gamma_m}Q(i)\right)^\alpha r^{1-\alpha}\le r^{1-\alpha}2^{-\Delta}.
\end{equation}

By eqs.~\eqref{eq:firstterm} and~\eqref{eq:secondterm}, the expected value of~\eqref{eq:suminlog} is upper bounded by
\begin{equation}
(1-r)^{1-\alpha}2^{(1-\alpha)\left(\entropy_\alpha(Q)-\sum_{j=1}^k\log M_j\right)}+r^{1-\alpha}2^{-\Delta},
\end{equation}
therefore there exists a realization of the random variables achieving a value which does not exceed this bound. According to Lemma~\ref{lem:todiagonals}, there is a probability distribution $P_{XY}$ such that $\ketbra{\varphi}{\varphi}\loccto\GHZ_{P_{XY}}$ and
\begin{equation}
\upentropy[\alpha](X|Y)_P\ge\frac{1}{1-\alpha}\log\left((1-r)^{1-\alpha}2^{(1-\alpha)\left(\entropy_\alpha(Q)-\sum_{j=1}^k\log M_j\right)}+r^{1-\alpha}2^{-\Delta}\right).
\end{equation}
We choose $r$ optimally (as can be seen by differentiation or observing that the right hand side itself is a R\'enyi divergence) as
\begin{equation}
r=\frac{2^{-\Delta/\alpha}}{2^{\frac{1-\alpha}{\alpha}\left(\entropy_\alpha(Q)-\sum_{j=1}^k\log M_j\right)}+2^{-\Delta/\alpha}},
\end{equation}
which leads to the bound
\begin{equation}
\upentropy[\alpha](X|Y)_P\ge\frac{\alpha}{1-\alpha}\log\left(2^{\frac{1-\alpha}{\alpha}\left(\entropy_\alpha(Q)-\sum_{j=1}^k\log M_j\right)}+2^{-\Delta/\alpha}\right)
\end{equation}
for some $P$. Lemma~\ref{lem:approximatefrommixture} implies that $\GHZ_{P_{XY}}\loccto[\epsilon]\GHZ^{\otimes N}$, therefore
\begin{equation}
\ketbra{\varphi}{\varphi}\loccto[\epsilon]\GHZ^{\otimes N}.
\end{equation}
\end{proof}

\begin{corollary}\label{cor:oneshotsimpler}
Using the same notations as in the statement of Theorem~\ref{thm:oneshot}, assume $\Delta>0$ and $\minentropy(Q)>\sum_{j=1}^k\log M_j$. Then for any $\epsilon\in(0,1)$ the relation $\ketbra{\varphi}{\varphi}\loccto[\epsilon]\GHZ^{\otimes N}$ holds with some $N$ satisfying
\begin{equation}\label{eq:simplebound}
N\ge\left(\minentropy(Q)-\sum_{j=1}^k\log M_j\right)\left(1-\frac{1}{\Delta}\right)-\left(2+\frac{\sum_{j=1}^k\log|I_j|}{\Delta}\right)\log\frac{10}{\epsilon^2}.
\end{equation}
\end{corollary}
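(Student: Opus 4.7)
The plan is to apply Theorem~\ref{thm:oneshot} with a carefully chosen value of $\alpha$ and then simplify. It is convenient to reparameterize by $s := (\alpha-1)/\alpha \in (0,1)$, under which $\alpha/(1-\alpha) = -1/s$, $1 + 1/(\alpha-1) = 1/s$, and the bound of Equation~\eqref{eq:oneshotbound} rewrites as
\begin{equation*}
N_\alpha := -\frac{1}{s}\log\bigl(2^{-sL} + 2^{-(1-s)\Delta}\bigr) - \frac{1}{s}\log\frac{10}{\epsilon^2},
\end{equation*}
where $L := \entropy_\alpha(Q) - \sum_{j=1}^k\log M_j$.

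Set $M := \minentropy(Q) - \sum_{j=1}^k\log M_j$, which is positive by hypothesis, and choose $s = \Delta/(M+\Delta)$, or equivalently $\alpha = 1 + \Delta/M$. Monotonicity of the R\'enyi entropies in their order gives $\entropy_\alpha(Q) \ge \minentropy(Q)$, hence $L \ge M$, hence $sL \ge sM = (1-s)\Delta$. Therefore $\log(2^{-sL} + 2^{-(1-s)\Delta}) \le \log(2\cdot 2^{-(1-s)\Delta}) = 1 - (1-s)\Delta$, and a short computation using the identities $(1-s)\Delta/s = M$ and $1/s = 1 + M/\Delta$ gives
\begin{equation*}
N_\alpha \ge M\bigl(1-1/\Delta\bigr) - 1 - \bigl(1+M/\Delta\bigr)\log(10/\epsilon^2).
\end{equation*}

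Three further crude steps produce the claimed inequality: (i) the floor in Theorem~\ref{thm:oneshot} costs an additional $-1$, giving $N \ge N_\alpha - 1$; (ii) since $M \le \minentropy(Q) \le \log|\support Q| \le \sum_{j=1}^k\log|I_j|$, I may enlarge $M$ to $\sum_{j=1}^k\log|I_j|$ inside the coefficient of $\log(10/\epsilon^2)$; (iii) since $\log(10/\epsilon^2) > \log 10 > 2$ for every $\epsilon \in (0,1)$, the additive constant $-2$ is bounded below by $-2\log(10/\epsilon^2)$, and combining this with $-(1+\sum_{j=1}^k\log|I_j|/\Delta)\log(10/\epsilon^2)$ produces the advertised coefficient $-(2+\sum_{j=1}^k\log|I_j|/\Delta)\log(10/\epsilon^2)$.

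The only real choice in the argument is the value of $\alpha$. The naive balancing $s = \Delta/(L+\Delta)$ would depend implicitly on $\alpha$ via $L$; replacing $L$ by its $\alpha$-independent lower bound $M$ is what makes the choice explicit, and as the computation shows, it gives precisely the coefficient $1-1/\Delta$ in front of $\minentropy(Q) - \sum_{j=1}^k\log M_j$. No other step presents any real obstacle.
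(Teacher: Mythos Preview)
Your argument is correct and essentially identical to the paper's: same choice $\alpha = 1 + \Delta/M$ (the reparameterization $s=(\alpha-1)/\alpha$ is purely cosmetic), same use of $\entropy_\alpha(Q)\ge\minentropy(Q)$, and the same three final simplifications. One small slip in step~(iii): the inequality $\log(10/\epsilon^2)>2$ gives $-2\ge-\log(10/\epsilon^2)$, not $-2\ge-2\log(10/\epsilon^2)$; it is the former that, combined with $-(1+\sum_j\log|I_j|/\Delta)\log(10/\epsilon^2)$, yields the advertised coefficient $-(2+\sum_j\log|I_j|/\Delta)$.
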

\begin{proof}
The right hand side of \eqref{eq:oneshotbound} is an increasing function of the entropy of $Q$, so it gets smaller if we replace the entropy by $\minentropy(Q)$. Using the abbreviation $h=\minentropy(Q)-\sum_{j=1}^k\log M_j$, set
\begin{equation}
\alpha=1+\frac{\Delta}{h}.
\end{equation}
Then the logarithm in \eqref{eq:oneshotbound} becomes
\begin{equation}
\log\left(2^{\frac{1-\alpha}{\alpha}h}+2^{-\Delta/\alpha}\right)=1-\frac{\Delta}{\alpha}.
\end{equation}
We get the lower bound
\begin{equation}
\begin{split}
N
 & =\left\lfloor \frac{\alpha}{1-\alpha}\log\left(2^{\frac{1-\alpha}{\alpha}\left(\entropy_\alpha(Q)-\sum_{j=1}^k\log M_j\right)}+2^{-\Delta/\alpha}\right)-\left(1+\frac{1}{\alpha-1}\right)\log\frac{10}{\epsilon^2}\right\rfloor  \\
 & \ge \frac{\alpha}{1-\alpha}\left(1-\frac{\Delta}{\alpha}\right)-\left(1+\frac{1}{\alpha-1}\right)\left(\log\frac{10}{\epsilon^2}\right)-1  \\
 & = h\left(1-\frac{1}{\Delta}\right)-\left(1+\frac{h}{\Delta}\right)\left(\log\frac{10}{\epsilon^2}\right)-2  \\
 & \ge h\left(1-\frac{1}{\Delta}\right)-\left(1+\frac{\sum_{j=1}^k\log|I_j|}{\Delta}\right)\left(\log\frac{10}{\epsilon^2}\right)-2  \\
 & \ge h\left(1-\frac{1}{\Delta}\right)-\left(2+\frac{\sum_{j=1}^k\log|I_j|}{\Delta}\right)\left(\log\frac{10}{\epsilon^2}\right),
\end{split}
\end{equation}
in the last two steps using $h\le\minentropy(Q)\le\sum_{j=1}^k\log|I_j|$ and $\log\frac{10}{\epsilon^2}\ge\log10>2$.
\end{proof}

The asymptotic statement follows by a standard argument involving the asymptotic equipartition property.
\begin{proof}[Proof of Theorem~\ref{thm:asymptotic}]
There is nothing to prove if $\entropy(P)-\sum_{j=1}^kx_j\le 0$. Otherwise, let $\epsilon\in(0,1)$ and $\delta>0$ small enough, and define for every (large) $n\in\mathbb{N}$ the jointly typical set $\mathcal{T}^n$ to be the set of $n$-tuples $(i_1,\ldots,i_n)\in(I_1\times\cdots\times I_k)^n$ such that for every subset $J\subseteq[k]$ the inequality
\begin{equation}
\left|\entropy(A_J)_P-\frac{1}{n}\sum_{m=1}^n\log P_{A_J}((i_{mj})_{j\in J})\right|\le\delta+\frac{1}{n}\log(1-\epsilon)
\end{equation}
holds. Let $\ket{\varphi}\in(\mathbb{C}^{I_1}\otimes\cdots\otimes\mathbb{C}^{I_k})^{\otimes n}$ be the vector with components
\begin{equation}
\varphi_{i_1\ldots i_n}=\begin{cases}
\frac{1}{\sqrt{P^{\otimes n}(T^n)}}\prod_{m=1}^n\psi_{i_n} & \text{if $i_1\ldots i_n\in T^n$}  \\
0 & \text{otherwise.}
\end{cases}
\end{equation}
By the asymptotic equipartition property (see e.g. \cite[Theorem 7.6.1]{cover2012elements})
\begin{equation}
\lim_{n\to\infty}\purifieddistance(\ketbra{\psi^{\otimes n}}{\psi^{\otimes n}},\ketbra{\varphi}{\varphi})=0.
\end{equation}
In particular, for sufficiently large $n$ the distance is less than $\epsilon$. By its definition, the following inequalities hold:
\begin{align}
\forall J\subseteq[k]:\maxentropy(A_J|A_{\overline{J}})_{\Phi'} & \le n(\entropy(A_J|A_{\overline{J}})_\Psi+\delta)  \\
\minentropy(\Phi') & \ge n(\entropy(\Psi)-\delta).
\end{align}

We use Corollary~\ref{cor:oneshotsimpler} with the vector $\ket{\varphi}$ (so that $Q(i)=|\varphi_i|^2$) and $M_j=\lceil 2^{n(x_j+2\delta)+k}\rceil$ (the role of the term $k$ is to cancel the $-k$ in the definition of $\Delta$).
Then for any $J\subseteq[k]$ such that $J\neq\emptyset$, $J\neq[k]$
\begin{equation}
\begin{split}
\sum_{j\in J}\log M_j-\maxentropy(A_J|A_{\overline{J}})-k
 & \ge \sum_{j\in J}\left(n(x_j+2\delta)+k\right)-\maxentropy(A_J|A_{\overline{J}})_Q-k  \\
 & \ge 2n\delta+n\sum_{j\in J}x_j-n(\entropy(A_J|A_{\overline{J}})_P+\delta)  \\
 & \ge n\delta,
\end{split}
\end{equation}
therefore $\Delta\ge n\delta$. On the other hand,
\begin{equation}
\begin{split}
\minentropy(Q)-\sum_{j=1}^k\log M_j
 & \ge n(\entropy(P)-\delta)-\sum_{j=1}^k\left(n(x_j+2\delta)+k\right)-k  \\
 & = n\left(\entropy(P)-\sum_{j=1}^kx_j-(2k+1)\delta\right)-k(k+1).
\end{split}
\end{equation}
Using \eqref{eq:simplebound} we conclude that $\ketbra{\psi}{\psi}^{\otimes n}\loccto[2\epsilon]\GHZ^{\otimes N}$ where
\begin{multline}
N\ge \left(n\left(\entropy(P)-\sum_{j=1}^kx_j-(2k+1)\delta\right)-k(k+1)\right)(1-\frac{1}{n\delta})  \\  -\left(2+\frac{\sum_{j=1}^k\log|I_j|}{\delta}\right)\log\frac{10}{\epsilon^2}.
\end{multline}
Letting $n\to\infty$ this gives the lower bound
\begin{equation}
\liminf_{n\to\infty}\frac{N}{n}\ge\entropy(P)-\sum_{j=1}^kx_j-(2k+1)\delta.
\end{equation}
Since this holds for any $\epsilon\in(0,1)$, we also get
\begin{equation}
\distillablerate(\ketbra{\psi}{\psi})\ge\entropy(P)-\sum_{j=1}^kx_j-(2k+1)\delta.
\end{equation}
Finally, the inequality is true for any $\delta>0$, therefore
\begin{equation}
\distillablerate(\ketbra{\psi}{\psi})\ge\entropy(P)-\sum_{j=1}^kx_j.
\end{equation}
\end{proof}

\begin{remark}
The one-shot result can be formulated in such a way that the resulting state is a pure GHZ state with high probability and the parties have access to a classical flag telling whether the transformation succeeded. That is, the protocol implements an exact LOCC transformation to $(1-o(1))\GHZ^{\otimes N}$ with essentially the same $N$ as before. To see this, one needs an exact probabilistic version of Lemma~\ref{lem:approximatefrommixture}, which also follows from Nielsen's theorem. However, in the asymptotic result the transformation becomes approximate because the projection onto the jointly typical subspace cannot be implemented by LOCC.
\end{remark}

\section{Evaluation of the optimal bound}\label{sec:linearprogram}

To get the best possible bound in Theorem~\ref{thm:asymptotic} or in Example~\ref{ex:uniformrandom}, one needs to minimize the sum $x_1+\cdots+x_k$ subject to the respective constraints. Introducing $h_J=\entropy(A_J|A_{\overline J})$ (respectively using the definition in Lemma~\ref{lem:maxentropyrate}), one is lead to the following linear program:
\begin{problem}[Primal]\label{problem:primal}
Minimize $\displaystyle\sum_{j=1}^k x_j$ subject to $\displaystyle\forall J\subseteq[k],J\neq\emptyset,J\neq[k]:\sum_{j\in J}x_j\ge h_J$.
\end{problem}
This linear program is clearly feasible and bounded, e.g. $x_j=\max_Jh_J$ satisfies the constraints and for every partition $\{J_1,\ldots,J_r\}$ of $[k]$ every feasible point satisfies $\sum_{j=1}^k x_j\ge\sum_{i=1}^r h_{J_i}$. Problem~\ref{problem:primal} is an example of a covering linear program. The dual problem is the following (packing) problem:
\begin{problem}[Dual]\label{problem:dual}
Maximize $\displaystyle\sum_{\substack{J\subseteq[k]  \\  J\neq\emptyset,J\neq[k]}}h_Jy_J$ subject to $\displaystyle\forall j:\sum_{J\ni j}y_J\le 1$, $\forall J:y_J\ge 0$.
\end{problem}

By strong duality, the optimal values of both programs are equal to each other. The advantage of the dual formulation is that the feasible region does not depend on the parameters $h_J$, therefore in principle one can find every vertex for a given $k$ and write down the optimal value as the maximum of a finite number of linear combinations of the parameters. Note that $h_J\ge 0$ implies that the maximum is attained at a vertex satisfying $\forall j:\sum_{J\ni j}y_J=1$. However, the number of such vertices is still very large except for small values of $k$.

To get the bound on the distillable GHZ rate (respectively the asymptotic subrank), one subtracts the optimal value of either linear program from $\entropy(P)$ (respectively $h_{[k]}$).

\begin{example}[Asymptotic subrank, $k=2$]
For $k=2$ the optimum is clearly at $x_j=h_{\{j\}}$ for $j=1,2$ in the primal formulation, $y_{\{1\}}=y_{\{2\}}=1$ in the dual one, which leads to the lower bound
\begin{equation}
\log\asymptoticsubrank(\Psi)\ge h_{\{1,2\}}-h_{\{1\}}-h_{\{2\}}.
\end{equation}
Recall that the terms on the right hand side depend on a chosen distribution $P\in\distributions(\Psi)$ through its marginals. The terms are
\begin{align}
h_{\{1,2\}} & = \max_{\substack{Q\in\distributions(\Psi)  \\  \forall j:Q_j=P_j}}\entropy(A_{\{1,2\}})_Q=\max_{\substack{Q\in\distributions(\Psi)  \\  \forall j:Q_j=P_j}}\entropy(Q)  \\
h_{\{1\}} & = \max_{\substack{Q\in\distributions(\Psi)  \\  \forall j:Q_j=P_j}}\entropy(A_{\{1\}}|A_{\{2\}})_Q=\max_{\substack{Q\in\distributions(\Psi)  \\  \forall j:Q_j=P_j}}\entropy(Q)-\entropy(P_2)  \\
h_{\{2\}} & = \max_{\substack{Q\in\distributions(\Psi)  \\  \forall j:Q_j=P_j}}\entropy(A_{\{2\}}|A_{\{1\}})_Q=\max_{\substack{Q\in\distributions(\Psi)  \\  \forall j:Q_j=P_j}}\entropy(Q)-\entropy(P_1),
\end{align}
therefore
\begin{equation}
\log\asymptoticsubrank(\Psi)\ge\max_{P\in\distributions(\Psi)}\left(\entropy(P_1)+\entropy(P_2)-\max_{\substack{Q\in\distributions(\Psi)  \\  \forall j:Q_j=P_j}}\entropy(Q)\right).
\end{equation}
Without loss of generality the maximum can be restricted to be over those $P$ which have maximal entropy given its marginals, in which case the expression to be maximized is $\mutualinformation(A_1:A_2)_P$.
\end{example}

\begin{example}[Asymptotic subrank, $k=3$]
Let $k=3$. Then the objective function is
\begin{equation}
h_{\{1\}}y_{\{1\}}+h_{\{2\}}y_{\{2\}}+h_{\{3\}}y_{\{3\}}+h_{\{1,2\}}y_{\{1,2\}}+h_{\{1,3\}}y_{\{1,3\}}+h_{\{2,3\}}y_{\{2,3\}},
\end{equation}
and the feasible region is given by the inequalities
\begin{subequations}
\begin{align}
y_{\{1\}}+y_{\{1,2\}}+y_{\{1,3\}} & \le 1  \label{eq:packingconstraintA} \\
y_{\{2\}}+y_{\{1,2\}}+y_{\{2,3\}} & \le 1  \label{eq:packingconstraintB} \\
y_{\{3\}}+y_{\{1,3\}}+y_{\{2,3\}} & \le 1  \label{eq:packingconstraintC} 
\end{align}
\end{subequations}
along with $y_J\ge 0$ for all $J$. The vertices a satisfying eqs.~\eqref{eq:packingconstraintA}--\eqref{eq:packingconstraintC} with equality are $(y_{\{1\}},y_{\{2\}},y_{\{3\}},y_{\{1,2\}},y_{\{1,3\}},y_{\{2,3\}})=(1,1,1,0,0,0)$, $(1,0,0,0,0,1)$, $(0,0,0,1/2,1/2,1/2)$, up to permutation of the subsystem indices. Therefore the maximum of the dual program is
\begin{equation}
\max\Big\{h_{\{1\}}+h_{\{2\}}+h_{\{3\}},h_{\{1\}}+h_{\{1,2\}},h_{\{2\}}+h_{\{1,3\}},h_{\{3\}}+h_{\{1,2\}},  \\  \frac{1}{2}\left(h_{\{1,2\}}+h_{\{1,3\}}+h_{\{2,3\}}\right)\Big\}.
\end{equation}
This has to be subtracted from $h_{\{1,2,3\}}$, so the best lower bound on $\log\asymptoticsubrank(\Psi)$ is the minimum of the following expressions
\begin{subequations}
\begin{gather}
h_{\{1,2,3\}}-h_{\{1\}}-h_{\{2\}}-h_{\{3\}}  \\
h_{\{1,2,3\}}-h_{\{1\}}-h_{\{1,2\}}  \\
h_{\{1,2,3\}}-h_{\{2\}}-h_{\{1,3\}}  \\
h_{\{1,2,3\}}-h_{\{3\}}-h_{\{1,2\}}  \\
h_{\{1,2,3\}}-\frac{1}{2}\left(h_{\{1,2\}}+h_{\{1,3\}}+h_{\{2,3\}}\right).
\end{gather}
\end{subequations}
\end{example}

\begin{example}[Asymptotic subrank of free subsets, $k=3$]
Let $k=3$ again, and assume now that $\Psi$ is free (introduced in ref. \cite{franz2002moment}, this notion of freeness differs from the one in Definition~\ref{def:subrank}), i.e. any two elements of $\Psi$ differ in at least two coordinates. Then $h_{\{j\}}=0$ for $j=1,2,3$, and
\begin{equation}
h_{\overline{j}}=\max_{\substack{Q\in\distributions(\Psi)  \\  \forall j:Q_j=P_j}}\entropy(Q)-\entropy(P_j).
\end{equation}
Assume without loss of generality that $P$ maximizes $\entropy(P)$ with given marginals. Then the lower bound on $\log\asymptoticsubrank(\Psi)$ becomes
\begin{equation}
\log\asymptoticsubrank(\Psi)
  \ge \min\left\{\entropy(P_1),\entropy(P_2),\entropy(P_3),\frac{1}{2}\left(\entropy(P_1)+\entropy(P_2)+\entropy(P_3)-\entropy(P)\right)\right\}.
\end{equation}
\end{example}

\begin{example}[Asymptotic subrank]
More generally, if $k$ is arbitrary and any two elements of $\Psi$ differ in at least $k-1$ coordinates, then $h_J=0$ for $|J|\le k-2$,
\begin{equation}
h_{\overline{j}}=\max_{\substack{Q\in\distributions(\Psi)  \\  \forall j:Q_j=P_j}}\entropy(Q)-\entropy(P_j),
\end{equation}
and
\begin{equation}
h_{[k]}=\max_{\substack{Q\in\distributions(\Psi)  \\  \forall j:Q_j=P_j}}\entropy(Q).
\end{equation}
Assume that $P$ has maximal entropy among the distributions with the same marginals, then $h_{\overline{j}}=\entropy(P)-\entropy(P_j)$ and $h_{[k]}=\entropy(P)$. Now most constraints of Problem~\ref{problem:primal} are vacuous, the only conditions are
\begin{equation}
\forall j:\left(\sum_{j=1}^kx_j\right)-x_j\ge \entropy(P)-\entropy(P_j).
\end{equation}
The optimum becomes
\begin{equation}
\max\left\{\entropy(P)-\entropy(P_1),\ldots,\entropy(P)-\entropy(P_k),\frac{1}{k-1}\left(k\entropy(P)-\entropy(P_1)-\cdots-\entropy(P_k)\right)\right\},
\end{equation}
which leads to the bound
\begin{equation}
\log\asymptoticsubrank(\Psi)\ge\min\left\{\entropy(P_1),\ldots,\entropy(P_k),\frac{1}{k-1}\left(\entropy(P_1)+\cdots+\entropy(P_k)-\entropy(P)\right)\right\}
\end{equation}
\end{example}

We now turn to optimizing the lower bound of Theorem~\ref{thm:asymptotic}. When evaluating the optimum of Problem~\ref{problem:dual} in this setting, the number of vertices that need to be considered can be reduced using strong subadditivity of the entropy. Indeed, suppose that a tentative maximum point satisfies $y_{J_1}>0$, $y_{J_2}>0$ for some nonempty disjoint subsets $J_1,J_2\subseteq[k]$ such that $J_1\cup J_2\neq[k]$. Then the transformation $y_{J_1}\to y_{J_1}-\delta$, $y_{J_2}\to y_{J_2}-\delta$, $y_{J_1\cup J_2}\to y_{J_1\cup J_2}+\delta$ leads to a feasible point if $0<\delta\le\min\{y_{J_1},y_{J_2}\}$, and changes the value of the objective function by
\begin{multline}
\delta(\entropy(A_{J_1\cup J_2}|A_{\overline{J_1\cup J_2}})-\entropy(A_{J_1}|A_{\overline{J_1}})-\entropy(A_{J_2}|A_{\overline{J_2}}))  \\  =\delta(\entropy(A_{\overline{J_1}})+\entropy(A_{\overline{J_2}})-\entropy(A_{\overline{J_1\cup J_2}})-\entropy(A_{[k]})),
\end{multline}
which is nonnegative by the strong subadditivity inequality applied to the random variables $A_{J_1},A_{J_2},A_{\overline{J_1\cup J_2}}$. In particular, any vertex where such a reduction is possible can be excluded when looking for the maximum.

\begin{example}[Distillable entanglement, $k=3$]
For $k=3$, the remaining vertices are $(y_{\{1\}},y_{\{2\}},y_{\{3\}},y_{\{1,2\}},y_{\{1,3\}},y_{\{2,3\}})=(1,0,0,0,0,1)$ and $(0,0,0,1/2,1/2,1/2)$, up to permutations of the subsystems. The lower bound on $\distillablerate$ is
\begin{equation}
\min\{\mutualinformation(A_1:A_2A_3),\mutualinformation(A_2:A_1A_3),\mutualinformation(A_3:A_1A_2),\frac{1}{2}\mutualinformation(A_1:A_2:A_3)\},
\end{equation}
where $\mutualinformation(A_1:A_2:A_3)=\entropy(A_1)+\entropy(A_2)+\entropy(A_3)-\entropy(A_1A_2A_3)$.
\end{example}

\begin{example}[Distillable entanglement, $k=4$]
For $k=4$, the lower bound on $\distillablerate$ is the minimum of the quantities
\begin{subequations}
\begin{gather}
\mutualinformation(A_1:A_2A_3A_4)  \\
\mutualinformation(A_1A_2:A_3A_4)  \\
\frac{1}{2}\mutualinformation(A_1A_2:A_3:A_4)  \\
\frac{1}{3}\mutualinformation(A_1:A_2:A_3:A_4)  \\
\frac{1}{3}\entropy(A_1A_2)+\frac{1}{3}\entropy(A_1A_3)+\frac{1}{3}\entropy(A_2A_3)+\frac{2}{3}\entropy(A_4)-\frac{2}{3}\entropy(A_1A_2A_3A_4)
\end{gather}
\end{subequations}
and the similar ones with permuted subsystem indices.
\end{example}

\begin{remark}
If $I_1=\cdots=I_k$ and the state $\ketbra{\psi}{\psi}$ (or at least the induced distribution $P$) is symmetric, then the optimal value is attained at $x_1=\cdots=x_k$. This is because permutations of any feasible vector $x_j$ are still feasible and the objective function takes the same value on them. In this case, the number of inequalities is only $k-1$, since $\entropy(A_J|A_{\overline{J}})_P$ depends only on $|J|$. The optimal value for $x$ can be written as
\begin{equation}
x=\max_{1\le j\le k-1}\frac{\entropy(A_{[j]}|A_{\overline{[j]}})_P}{j}=\max_{1\le j\le k-1}\frac{\entropy(P)-\entropy(A_{[k-j]})_P}{j},
\end{equation}
whereas the lower bound on the distillable entanglement becomes
\begin{equation}
\distillablerate(\ketbra{\psi}{\psi})\ge\min_{1\le j\le k-1}\frac{k\entropy(A_{[k-j]})_P-(k-j)\entropy(P)}{j}.
\end{equation}
\end{remark}

\begin{example}[W state]
Let $\ket{\psi}=\ket{W_k}=\frac{1}{\sqrt{k}}(\ket{100\ldots 00}+\ket{010\ldots 00}+\cdots+\ket{00\ldots 01})$. As remarked above, the optimal value is attained at some $x_1=\cdots=x_k=x$. The distribution $P$ is now uniform on the support, $\entropy(P)=\log k$ and
\begin{equation}
\entropy(A_{[k-j]})_P=-\frac{k-j}{k}\log\frac{1}{k}-\frac{j}{k}\log\frac{j}{k},
\end{equation}
which leads to the bound
\begin{equation}
\distillablerate(\ketbra{W_k}{W_k})\ge\min_{1\le j\le k-1}\log\frac{k}{j}=\log\frac{k}{k-1}=O(k^{-1}).
\end{equation}

The support of $W_k$ in the computational basis has the property that any probability distribution on it is uniquely determined by its marginals, making the limits in Lemma~\ref{lem:maxentropyrate} especially simple to evaluate. They are equal to the appropriate Shannon conditional entropies, which implies that the lower bound of Example~\ref{ex:uniformrandom} is the same as the lower bound on the LOCC distillable rate. It is known that none of these lower bounds are optimal. The logarithm of the asymptotic subrank of the support is $h(1/k)$ where $h(p)=-p\log p-(1-p)\log(1-p)$ (see \cite{coppersmith1990matrix} for the lower bound ($k=3$) and \cite{vrana2015asymptotic,christandl2016asymptotic} (general $k$) and \cite{strassen1991degeneration,vrana2015asymptotic} for the upper bound). The LOCC distillation rate is not known, but for $k=3$ ref. \cite{smolin2005entanglement} proves a lower bound of $0.64327\ldots$, whereas our lower bound is only $0.58496\ldots$.
\end{example}

\begin{example}[Equal superposition of permutations]
Consider now the state $\ket{\psi}=\frac{1}{\sqrt{k!}}\sum_{\sigma\in S_k}\ket{\sigma(1)\sigma(2)\ldots\sigma(k)}$. Then $\entropy(A_{[k-j]})_P=\log\frac{k!}{j!}$, therefore we get the bound
\begin{equation}
\begin{split}
\distillablerate(\ketbra{\psi}{\psi})
 & \ge\log k!-k\max_{1\le j\le k-1}\frac{\log j!}{j}=\log k!-k\frac{\log(k-1)!}{k-1}  \\
 & =\frac{1}{\ln 2}-\frac{\log k}{2k}+O(k^{-1}).
\end{split}
\end{equation}
\end{example}

Finally, we would like to stress that the bound of Theorem~\ref{thm:asymptotic} depends on the local bases as well. At present we are not aware of an efficient way to optimize the basis choice, and we do not know whether the optimized bound is additive on copies of the same state or considering powers can lead to an improvement. On the other hand, if we fix a basis choice for a single copy and use the tensor power bases, then the bound is additive, since in the dual formulation the feasible region stays the same, whereas the objective function as well as $\entropy(P)$ are multiplied by the number of copies.

\section{Discussion}\label{sec:conclusion}

To help evaluate the strengths and weaknesses of our bound, we compare it with the lower bounds from refs. \cite{smolin2005entanglement} and \cite{streltsov2017rates} on specific families of tripartite states. Let $\rho_{ABC}=\rho=\ketbra{\psi}{\psi}$ denote the initial state, $\rho_A=\Tr_{BC}\rho$, etc. its marginals.

The method suggested in \cite[Example 11]{smolin2005entanglement} uses a protocol simultaneously distilling GHZ states and EPR pairs between a specified pair of parties. If we do this for two pairs of parties, then the resulting EPR pairs can be turned into GHZ states by teleportation. The asymptotic rates of GHZ states and EPR pairs between e.g. parties $A$ and $B$ are $\min\{\entropy(\rho_A),\entropy(\rho_B)\}-\entanglementcost(\rho_{AB})$ and $\entanglementcost(\rho_{AB})$, respectively, where $\entanglementcost$ stands for the entanglement cost, and can be replaced with a higher value as long as both rates stay nonnegative. Suppose that for a fraction $t$ of the initial states we apply the protocol to produce EPR pairs between $A$ and $B$, while for the remaining fraction we produce EPR pairs between $B$ and $C$. The above strategy leads to the following lower bound on $\distillablerate$:
\begin{multline}\label{eq:SVWparametricbound}
t(\min\{\entropy(\rho_A),\entropy(\rho_B)\}-\entanglementcost(\rho_{AB}))  \\  +(1-t)(\min\{\entropy(\rho_A),\entropy(\rho_C)\}-\entanglementcost(\rho_{AC}))+\min\{t\entanglementcost(\rho_{AB}),(1-t)\entanglementcost(\rho_{AC})\}
\end{multline}
This holds for any $t\in[0,1]$ and for any permutation of the three parties. This expression as a function of $t$ is either affine or obtained by gluing together two affine parts. Therefore the maximum is attained either at $t=0$ or $t=1$ or at the point where the arguments of the last minimum coincide, $t=\entanglementcost(\rho_{AC})/(\entanglementcost(\rho_{AB})+\entanglementcost(\rho_{AC}))$. For these values \eqref{eq:SVWparametricbound} evaluates to
\begin{gather}\label{eq:SVWbound}
\min\{\entropy(\rho_A),\entropy(\rho_C)\}-\entanglementcost(\rho_{AC})  \\
\min\{\entropy(\rho_A),\entropy(\rho_B)\}-\entanglementcost(\rho_{AB})  \\
\intertext{and}
\frac{\min\{\entropy(\rho_A),\entropy(\rho_B)\}\entanglementcost(\rho_{AC})+\min\{\entropy(\rho_A),\entropy(\rho_C)\}\entanglementcost(\rho_{AB})-\entanglementcost(\rho_{AB})\entanglementcost(\rho_{AC})}{\entanglementcost(\rho_{AB})+\entanglementcost(\rho_{AC})}
\end{gather}
respectively. The value of $\entanglementcost$ is in general not known, therefore in the graphs below we use the entanglement of formation ($\entanglementofformation$) instead as an upper bound, which for two qubits can be evaluated using Wootters' formula \cite{wootters1998entanglement}.

A different lower bound on the distillable rate comes from \cite[Theorem 2]{streltsov2017rates}, specialized to the GHZ state as the target:
\begin{equation}\label{eq:SMEbound}
\distillablerate(\rho)\ge\min\left\{\frac{\entropy(\rho_A)}{2},\entropy(\rho_B),\entropy(\rho_C)\right\}.
\end{equation}
Again, the same holds for any permutation of the subsystems. To get the best bound, the party with the highest local entropy should take the place of $A$.

For reference, we also compute the upper bound given by the bipartite cuts, namely $\min\{\entropy(\rho_A),\entropy(\rho_B),\entropy(\rho_C)\}$.

As a first simple example let us examine the generalized GHZ states $\GHZ_P\in\states(\mathbb{C}^{\mathcal{X}}\otimes\mathbb{C}^{\mathcal{X}}\otimes\mathbb{C}^{\mathcal{X}})$ where $P\in\distributions(\mathcal{X})$. With respect to the local bases $\{\ket{x}\}_{x\in\mathcal{X}}$, Theorem~\ref{thm:asymptotic} gives $\entropy(P)$, matching the bipartite upper bound. \eqref{eq:SVWbound} evaluates to the same lower bound, while \eqref{eq:SMEbound} gives only half of this rate. For more than three parties, Theorem~\ref{thm:asymptotic} still gives $\entropy(P)$, while the method of ref. \cite{streltsov2017rates} leads to a rate of $\entropy(P)/(k-1)$.

Next we consider the asymmetric W states $\ket{\psi_p}=\sqrt{p}\ket{100}+\sqrt{p}\ket{010}+\sqrt{1-2p}\ket{001}$ where $p\in[0,1/2]$. These interpolate between a separable state $(p=0)$ and an EPR pair shared between $A$ and $B$ ($p=1/2$), both of which have zero distillable GHZ entanglement, while the symmetric W state is recovered by choosing $p=1/3$. We choose the $\{\ket{0},\ket{1}\}$ as the local bases. The lower bound of Theorem~\ref{thm:asymptotic} gives the minimum of $h(1-2p)$ and $h(p)-p$, \eqref{eq:SVWbound} yields
\begin{equation}
\frac{h(p)h\left(\frac{1-\sqrt{1-4p(1-2p)}}{2}\right)+\min\{h(p),h(1-2p)\}h\left(\frac{1-\sqrt{1-4p^2}}{2}\right)}{h\left(\frac{1-\sqrt{1-4p(1-2p)}}{2}\right)+h\left(\frac{1-\sqrt{1-4p^2}}{2}\right)}.
\end{equation}
The lower bound from \eqref{eq:SMEbound} is $\min\{h(p),h(1-2p),\max\{h(p)/2,h(1-2p)/2\}\}$. The three lower bounds together with the bipartite upper bound $\min\{h(p),h(1-2p)\}$ are illustrated in Figure~\ref{fig:Wstate}. When $p$ is sufficiently close to $1/2$, both \eqref{eq:SMEbound} and our bound match the upper bound $h(1-2p)$. More precisely, if we denote by $p^*$ the smallest value in $[1/3,1/2]$ such that $p^*\le p$ implies $\distillablerate(\ketbra{\psi_p}{\psi_p})=h(1-2p)$, then \eqref{eq:SMEbound} shows that $p^*\le0.45569\ldots$ (the solution of $h(p)/2=h(1-2p)$), whereas our bound improves this to $p^*\le0.4359\ldots$ (the solution of $h(p)-p=h(1-2p)$).
\begin{figure}
\centering
\includegraphics{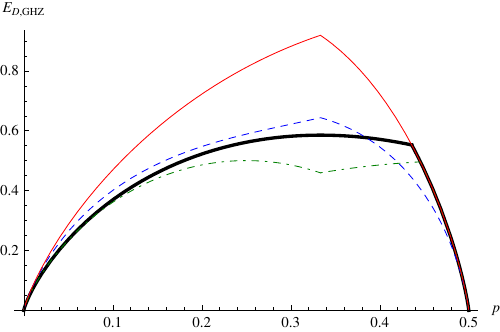}
\caption{\label{fig:Wstate}Bounds on the distillable entanglement of the asymmetric $W$ state. (Thick black: lower bound from Theorem~\ref{thm:asymptotic}, dashed blue: lower bound from \cite{smolin2005entanglement}, dot-dashed green: lower bound from \cite{streltsov2017rates}, solid red: upper bound given by smallest local entropy)}
\end{figure}

Consider now the state
\begin{equation}\label{eq:Rohrlich}
\ket{R_p}=\sqrt{\frac{p}{2}}\ket{000}+\sqrt{\frac{p}{2}}\ket{011}+\sqrt{\frac{1-p}{2}}\ket{100}-\sqrt{\frac{1-p}{2}}\ket{111}.
\end{equation}
This family interpolates between an EPR pair between $B$ and $C$ ($p=0$) and a GHZ state ($p=1/2$, up to an Hadamard gate applied at $A$). This state is studied in \cite[Example 10.]{smolin2005entanglement} (also in \cite{groisman2005entanglement}), where it is found that their protocol gives a GHZ rate of $h(p)$, matching the bipartite upper bound. In contrast, \eqref{eq:SMEbound} gives only $\min\{h(p),1/2\}$, which exemplifies that the method based on combing works best when the state is close to a bipartite one and cannot make use of genuine multipartite entanglement. When applied with the computational basis, Theorem~\ref{thm:asymptotic} only gives the trivial lower bound $0$. This is because the measurement of $A$ is independent of the measurement results of $B$ and $C$ together. However, a rotation on the first qubit can result in a nontrivial bound. Measuring in the $(\ket{0}\pm\ket{1})/\sqrt{2}$ basis leads to a lower bound of $1-h(1/2+\sqrt{p(1-p)})$. The bounds are illustrated in Figure~\ref{fig:Rohrlich}.
\begin{figure}
\centering
\includegraphics{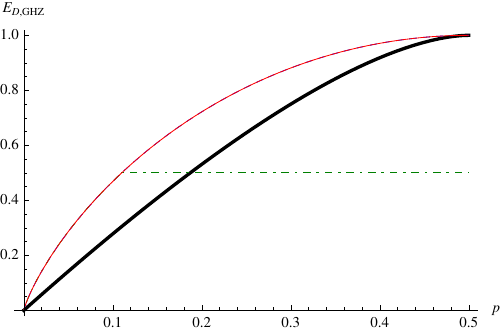}
\caption{\label{fig:Rohrlich}Bounds on the distillable entanglement of the state in \eqref{eq:Rohrlich}. (Thick black: lower bound from Theorem~\ref{thm:asymptotic}, dot-dashed green: lower bound from \cite{streltsov2017rates}, solid red: upper bound given by smallest local entropy, equal to the lower bound from \cite{smolin2005entanglement})}
\end{figure}

\section*{Acknowledgements}

We acknowledge financial support from the European Research Council (ERC Grant Agreement no. 337603) and VILLUM FONDEN via the QMATH Centre of Excellence (Grant no. 10059). This research was supported by the National Research, Development and Innovation Fund of Hungary within the Quantum Technology National Excellence Program (Project Nr.~2017-1.2.1-NKP-2017-00001) and via the research grants K124152, KH~129601 (PV).

\appendix

\section{Entropies}\label{sec:entropies}

In this section we will collect some definitions and facts from single-shot information theory. We mostly follow the notations of ref. \cite{tomamichel2015quantum}, but specialize to classical (commuting) random variables, therefore there is no need to distinguish different types of R\'enyi divergences.

With the help of the R\'enyi divergence
\begin{equation}
\relativeentropy[\alpha]{P}{Q}=\frac{1}{\alpha-1}\log\sum_{x\in\mathcal{X}}P(x)^\alpha Q(x)^{1-\alpha}
\end{equation}
between subnormalized distributions $P,Q\in\subdistributions(\mathcal{X})$, one defines two versions of the R\'enyi conditional entropies (the first one already encountered in the main text, while the second one only used in the appendices with $\alpha\to\infty$),
\begin{equation}
\begin{split}
\upentropy[\alpha](X|Y)_P
 & = \sup_{Q\in\distributions(\mathcal{Y})}-\relativeentropy[\alpha]{P_{XY}}{I_X\otimes Q_Y}  \\
 & = \frac{\alpha}{1-\alpha}\log\left(\sum_{y\in\mathcal{Y}}P_Y(y)\left(\sum_{x\in\mathcal{X}}P_{X|Y}(x|y)^\alpha\right)^{1/\alpha}\right)
\end{split}
\end{equation}
and
\begin{equation}
\begin{split}
\downentropy[\alpha](X|Y)_P
 & = -\relativeentropy[\alpha]{P_{XY}}{I_X\otimes P_Y}  \\
 & = \frac{1}{1-\alpha}\log\left(\sum_{\substack{x\in\mathcal{X}  \\  y\in\mathcal{Y}}}P_{XY}(x,y)^\alpha P_Y(y)^{1-\alpha}\right).
\end{split}
\end{equation}

Both definitions can be extended to $\alpha=\infty$ by taking limits:
\begin{align}
\upentropy[\infty](X|Y)_P & = -\log\sum_{y\in\mathcal{Y}}P_Y(y)\max_{x\in\mathcal{X}}P_{X|Y}(x|y) \\
\downentropy[\infty](X|Y)_P & =\min_{y\in\support P_Y}\min_{x\in\mathcal{X}}\log\frac{1}{P_{X|Y}(x|y)}.
\end{align}
We define the smooth min-entropy as
\begin{equation}
\minentropy[\epsilon](X|Y)_P=\max_{Q\in\ball{\epsilon}{P}}\upentropy[\infty](X|Y)_Q.
\end{equation}
Note that if $P$ is normalized and we allow embedding into a larger alphabet, then the maximum is attained at a normalized distribution $Q$ \cite[Lemma 6.5]{tomamichel2015quantum}. The smooth min-entropy can be lower bounded using the R\'enyi entropies as \cite[eq. (6.92)]{tomamichel2015quantum}
\begin{equation}\label{eq:minentropyfromRenyientropy}
\minentropy[\epsilon](X|Y)_P\ge\upentropy[\alpha](X|Y)_P-\frac{1}{\alpha-1}\log\frac{2}{\epsilon^2}
\end{equation}
for $P\in\subdistributions(\mathcal{X}\times\mathcal{Y})$ and any $\alpha>1$ and $\epsilon\in(0,1)$.

The alternative smooth min-entropy is \cite[Definition 4.]{tomamichel2011leftover}
\begin{equation}
\altminentropy[\epsilon](X|Y)_P=\max_{Q\in\ball{\epsilon}{P}}\downentropy[\infty](X|Y)_Q.
\end{equation}
Note again that if $P$ is normalized then the optimal $Q$ can be chosen normalized as well. The min-entropy and the alternative min-entropy are related as \cite[Lemma 20.]{tomamichel2011leftover}
\begin{equation}\label{eq:altminentropyfromminentropy}
\altminentropy[\epsilon_1+\epsilon_2](X|Y)_P\ge\minentropy[\epsilon_2](X|Y)_P-\log\left(\frac{2}{\epsilon_1^2}+\frac{1}{1-\epsilon_2}\right).
\end{equation}

We now combine the inequalities to bound the alternative smooth min-entropy in terms of the ``up'' R\'enyi entropy.
\begin{lemma}\label{lem:altminentropyfromRenyientropy}
For $P\in\distributions(\mathcal{X}\times\mathcal{Y})$, $\alpha>1$ and $\epsilon\in(0,1)$ the inequality
\begin{equation}
\altminentropy[\epsilon](X|Y)_P\ge\upentropy[\alpha](X|Y)_P-\left(1+\frac{1}{\alpha-1}\right)\log\frac{10}{\epsilon^2}
\end{equation}
holds.
\end{lemma}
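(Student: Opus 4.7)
The plan is to simply chain the two prior inequalities \eqref{eq:minentropyfromRenyientropy} and \eqref{eq:altminentropyfromminentropy} with an appropriate split of the smoothing parameter. Writing $\epsilon = \epsilon_1 + \epsilon_2$, I would first apply \eqref{eq:altminentropyfromminentropy} to relate $\altminentropy[\epsilon]$ to $\minentropy[\epsilon_2]$, and then apply \eqref{eq:minentropyfromRenyientropy} to bound $\minentropy[\epsilon_2]$ from below by $\upentropy[\alpha]$. This yields
\begin{equation*}
\altminentropy[\epsilon](X|Y)_P \ge \upentropy[\alpha](X|Y)_P - \frac{1}{\alpha-1}\log\frac{2}{\epsilon_2^2} - \log\left(\frac{2}{\epsilon_1^2} + \frac{1}{1-\epsilon_2}\right).
\end{equation*}

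The remaining task is to choose $\epsilon_1, \epsilon_2$ so that the additive loss is bounded by $\left(1+\frac{1}{\alpha-1}\right)\log\frac{10}{\epsilon^2}$. A symmetric split $\epsilon_1 = \epsilon_2 = \epsilon/2$ is natural and sufficient: one gets $\frac{2}{\epsilon_2^2} = \frac{8}{\epsilon^2} \le \frac{10}{\epsilon^2}$, while $\frac{2}{\epsilon_1^2} + \frac{1}{1-\epsilon_2} \le \frac{8}{\epsilon^2} + 2$. For $\epsilon \in (0,1)$ we have $2 \le \frac{2}{\epsilon^2}$, so this is at most $\frac{10}{\epsilon^2}$. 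Combining these estimates gives the claimed inequality with the factor $1 + \frac{1}{\alpha-1}$ coming from adding the $\log\frac{10}{\epsilon^2}$ contributions from the two steps.

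There is no real obstacle here; the only mild point of care is the constant $10$, which is engineered precisely so that the bound $\frac{2}{\epsilon_1^2} + \frac{1}{1-\epsilon_2} \le \frac{10}{\epsilon^2}$ holds uniformly for $\epsilon \in (0,1)$ with the symmetric split. Any larger constant would work equally well, but $10$ keeps the statement as in \cite{tomamichel2015quantum,tomamichel2011leftover}. No new ideas beyond the two cited inequalities are required.
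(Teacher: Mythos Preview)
Your proposal is correct and matches the paper's own proof essentially line for line: the paper also splits $\epsilon_1=\epsilon_2=\epsilon/2$, chains \eqref{eq:altminentropyfromminentropy} with \eqref{eq:minentropyfromRenyientropy}, and uses $(1-\epsilon/2)^{-1}<2\le 2/\epsilon^2$ to absorb both loss terms into $\left(1+\frac{1}{\alpha-1}\right)\log\frac{10}{\epsilon^2}$.
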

\begin{proof}
We use \eqref{eq:altminentropyfromminentropy} with $\epsilon_1=\epsilon_2=\epsilon/2$ and then \eqref{eq:minentropyfromRenyientropy} with smoothing parameter $\epsilon/2$:
\begin{equation}
\begin{split}
\altminentropy[\epsilon](X|Y)_P
 & \ge \minentropy[\epsilon/2](X|Y)_P-\log\left(\frac{8}{\epsilon^2}+\frac{1}{1-\epsilon/2}\right)  \\
 & \ge \upentropy[\alpha](X|Y)_P-\frac{1}{\alpha-1}\log\frac{8}{\epsilon^2}-\log\left(\frac{8}{\epsilon^2}+\frac{1}{1-\epsilon/2}\right)  \\
 & \ge \upentropy[\alpha](X|Y)_P-\frac{1}{\alpha-1}\log\frac{10}{\epsilon^2}-\log\frac{10}{\epsilon^2}
\end{split}
\end{equation}
since $(1-\epsilon/2)^{-1}<2<2/\epsilon^2$.
\end{proof}

\begin{remark}
The bound derived in Lemma~\ref{lem:altminentropyfromRenyientropy} has the advantage that it has a simple form and is valid for the largest possible range of parameters. This comes at the cost of not being tight in certain regimes, including the one we use in the proof of Theorem~\ref{thm:asymptotic} (namely $\alpha\approx 1$, $\epsilon\ll 1$). However, this does not affect the resulting asymptotic bound.
\end{remark}

\section{Properties of random generalized GHZ states}\label{sec:GHZmixtures}

The goal of this section is to derive some properties of random GHZ states (see Definition~\ref{def:randomGHZ}) and to prove Lemma~\ref{lem:approximatefrommixture}. First note that Nielsen's theorem \cite{nielsen1999conditions} extends to a characterisation of LOCC transformations between the pure states $\GHZ_P$:
\begin{theorem}[Nielsen]
Let $P$ and $Q$ be probability distributions. Then $\GHZ_P\loccto\GHZ_Q$ iff $P$ is majorized by $Q$.
\end{theorem}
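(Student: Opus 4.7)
My plan is to prove both directions by reduction to the bipartite case, exploiting the very special structure of generalized GHZ states: in the Schmidt decomposition across any cut separating party~1 from the rest, the Schmidt basis on the ``rest'' side consists of the fully correlated vectors $\ket{xx\cdots x}$.

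For the ``only if'' direction, I would argue that any multipartite LOCC channel is in particular a bipartite LOCC channel across the cut separating party~1 from parties $2,\ldots,k$. The Schmidt coefficients of $\GHZ_P$ in this cut are $\sqrt{P(x)}$, with Schmidt vectors $\ket{x}_1$ and $\ket{xx\cdots x}_{2\ldots k}$, and analogously for $\GHZ_Q$. Thus if $\GHZ_P \loccto \GHZ_Q$, the bipartite Nielsen theorem immediately yields $P$ majorized by $Q$.

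The substantive direction is ``if''. My plan is to use the classical Hardy--Littlewood--P\'olya theorem together with Birkhoff's theorem: $P$ majorized by $Q$ implies that there exist probabilities $p_k \ge 0$ with $\sum_k p_k = 1$ and permutations $\sigma_k$ of the alphabet $\mathcal{X}$ such that $P(x) = \sum_k p_k\, Q(\sigma_k^{-1}(x))$, i.e.\ $\operatorname{diag}(P) = \sum_k p_k\, U_{\sigma_k}\operatorname{diag}(Q)U_{\sigma_k}^{\dagger}$ for permutation unitaries $U_{\sigma_k}$. From this I will extract a POVM $\{A_k\}$ on party~1 with the property that
\begin{equation*}
(A_k \otimes I^{\otimes (k-1)})\ket{\GHZ_P} \;=\; \sqrt{p_k}\,\bigl(I_1 \otimes U_{\sigma_k}^{\otimes (k-1)}\bigr)\ket{\GHZ_Q},
\end{equation*}
exactly as in the bipartite Nielsen construction; concretely, define $(A_k)_{yz} := \sqrt{p_k Q(y)/P(z)}\,(U_{\sigma_k})_{zy}$ on the support of $P$ and check that $\sum_k A_k^{\dagger}A_k = I$ using the Hardy--Littlewood--P\'olya identity above.

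The key additional observation that makes the bipartite protocol implementable by multipartite LOCC is that the unitary $U_{\sigma_k}^{\otimes(k-1)}$ required on the ``rest'' side is already in product form: after party~1 broadcasts the outcome $k$ over the classical channel, each of parties $2,\ldots,k$ simply applies $U_{\sigma_k}^{\dagger}$ locally, producing $\ket{\GHZ_Q}$ with certainty when averaged over outcomes. The main potential obstacle is ensuring that one can arrange the bipartite decomposition so that the conditional unitaries are permutation matrices (rather than arbitrary unitaries that would couple the remaining parties); this is precisely where Birkhoff's theorem is essential, and where the argument uses something beyond bare bipartite Nielsen. With these ingredients the ``if'' direction follows by a direct verification.
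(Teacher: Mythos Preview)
Your argument is correct. The paper itself does not actually prove this theorem: it simply states it as an immediate extension of Nielsen's bipartite result \cite{nielsen1999conditions} and moves on. So there is no ``paper's own proof'' to compare against beyond the implicit claim that the bipartite case transfers directly.

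Your proposal makes this transfer explicit, and does so in the natural way. The ``only if'' direction is exactly the one-line reduction to any bipartite cut. For the ``if'' direction, the crucial point---which you identify correctly---is that one cannot simply invoke bipartite Nielsen across the $1\,|\,2\ldots k$ cut, because the correcting unitary on the composite system $2\ldots k$ produced by a generic Nielsen-style protocol need not be a tensor product. Your use of Birkhoff's theorem to force the doubly stochastic matrix into a convex combination of permutation matrices is precisely what guarantees that the conditional correction $U_{\sigma}^{\otimes(k-1)}$ factors over the remaining parties, making the protocol genuinely multipartite LOCC. The verification that $\sum A^\dagger A = I$ on $\support P$ via the Hardy--Littlewood--P\'olya identity is straightforward and correct; outside $\support P$ one completes the measurement with an element that never fires on $\GHZ_P$, which you implicitly acknowledge.

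One cosmetic issue: you are using the symbol $k$ both for the number of parties and for the index of the measurement outcomes/permutations, which makes expressions like $U_{\sigma_k}^{\otimes(k-1)}$ ambiguous on first reading. Rename the outcome index.
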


\begin{lemma}\label{lem:exactfrommixture}
Let $P_{XY}$ be a probability distribution on $\mathcal{X}\times\mathcal{Y}$. Then
\begin{equation}
\GHZ_{P_{XY}}\loccto\GHZ^{\otimes\lfloor\downentropy[\infty](X|Y)_{P}\rfloor}.
\end{equation}
\end{lemma}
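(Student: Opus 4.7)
The plan is to view $\GHZ_{P_{XY}}$ as a classical mixture of pure generalized GHZ states labeled by the shared flag $y$, apply Nielsen's theorem conditioned on the outcome $y$, and check that the min-max-entropy is exactly what makes every branch individually convertible to $\GHZ^{\otimes N}$.

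First I would rewrite the state by factoring out the classical register. Using $P_{XY}(x,y)=P_Y(y)P_{X|Y}(x|y)$, one obtains
\begin{equation*}
\GHZ_{P_{XY}}=\sum_{y\in\mathcal{Y}}P_Y(y)\,\ketbra{y\ldots y}{y\ldots y}\otimes\GHZ_{P_{X|Y=y}},
\end{equation*}
where the $\mathcal{Y}$-register sits in the second factor of each party's Hilbert space $\mathbb{C}^{\mathcal{X}}\otimes\mathbb{C}^{\mathcal{Y}}$, and $\GHZ_{P_{X|Y=y}}$ is the pure generalized GHZ state on $(\mathbb{C}^{\mathcal{X}})^{\otimes k}$ associated with the conditional distribution. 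Every party already holds its own copy of $y$, so reading it off is a local operation that produces no error.

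Next, set $N=\lfloor\downentropy[\infty](X|Y)_P\rfloor$. By the definition of $\downentropy[\infty]$, for every $y\in\support P_Y$ and every $x\in\mathcal{X}$ one has
\begin{equation*}
P_{X|Y}(x|y)\le 2^{-\downentropy[\infty](X|Y)_P}\le 2^{-N}.
\end{equation*}
In particular $\max_x P_{X|Y}(x|y)\le 2^{-N}$ for each $y\in\support P_Y$, which means the sorted tails of $P_{X|Y=y}$ satisfy $\sum_{i=1}^k P_{X|Y=y}^{\downarrow}(i)\le k\cdot 2^{-N}$ for every $k$. This is exactly the statement that $P_{X|Y=y}$ is majorized by the uniform distribution on $2^N$ points, i.e.\ by the squared Schmidt distribution of $\GHZ^{\otimes N}$.

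Therefore by Nielsen's theorem (cited just before the lemma), for each $y\in\support P_Y$ there is an LOCC protocol $\Lambda_y$ with $\Lambda_y(\GHZ_{P_{X|Y=y}})=\GHZ^{\otimes N}$. Combining these into the classically controlled LOCC channel $\Lambda=\sum_y \ketbra{y\ldots y}{y\ldots y}\otimes \Lambda_y$ (which is implementable because every party sees $y$), and finally discarding the now-redundant $y$-register, produces $\GHZ^{\otimes N}$ on every branch, yielding $\GHZ_{P_{XY}}\loccto\GHZ^{\otimes N}$ exactly. There is no real obstacle here; the only point demanding care is the translation between the min-max-entropy bound and the Nielsen majorization condition, which as shown collapses to the single inequality $\max_x P_{X|Y=y}(x)\le 2^{-N}$.
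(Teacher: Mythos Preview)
Your proof is correct and follows essentially the same approach as the paper: decompose $\GHZ_{P_{XY}}$ as a classical mixture over the flag $y$ of pure generalized GHZ states, apply Nielsen's theorem on each branch to convert to $\GHZ^{\otimes N}$, and then discard the flag. The paper's version is slightly terser (it first isolates the unconditional case and then reduces, rather than writing out the majorization inequality explicitly), but the argument is the same.
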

\begin{proof}
When the conditioning is trivial ($|\mathcal{Y}|=1$), the relation is an immediate consequence of Nielsen's theorem. In the general case, this implies
\begin{equation}
\begin{split}
\GHZ_{P_{XY}(\cdot|y)}
 & \loccto\GHZ^{\otimes\lfloor\downentropy[\infty](X|Y=y)\rfloor}  \\
 & \loccto\GHZ^{\otimes\lfloor\min_{y\in\support P_Y}\downentropy[\infty](X|Y=y)\rfloor} =\GHZ^{\otimes\lfloor\downentropy[\infty](X|Y)\rfloor}.
\end{split}
\end{equation}
In the second transformation some of the copies are discarded, which is clearly possible via LOCC. To implement the stated transformation, the corresponding transformations are preformed conditioned on the classical value $y$, and then the labels are erased.
\end{proof}

Next we turn to approximate transformations. The following lemma relates the purified distance ($\purifieddistance$) of two random GHZ states to that of the probability distributions.
\begin{lemma}\label{lem:mixturedistance}
Let $P,Q\in\subdistributions(\mathcal{X}\times\mathcal{Y})$. Then $\purifieddistance(\GHZ_{P_{XY}},\GHZ_{Q_{XY}})=\purifieddistance(P,Q)$.
\end{lemma}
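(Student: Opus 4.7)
The plan is to exploit the very rigid structure of $\GHZ_{P_{XY}}$: it is block-diagonal with respect to the decomposition of the Hilbert space $(\mathbb{C}^{\mathcal{X}\times\mathcal{Y}})^{\otimes k}$ given by the value of the second component of any (all) of the $k$ tensor factors, and each block is at most rank one. Concretely, for each $y\in\mathcal{Y}$ set
\begin{equation}
\ket{\phi_y(P)}=\sum_{x\in\mathcal{X}}\sqrt{P_{XY}(x,y)}\ket{(xy)\ldots(xy)},
\end{equation}
so that $\GHZ_{P_{XY}}=\sum_{y\in\mathcal{Y}}\ketbra{\phi_y(P)}{\phi_y(P)}$, with the $\ket{\phi_y(P)}$ pairwise orthogonal (they live in subspaces distinguished by the $y$-labels on any of the $k$ tensor factors). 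The analogous decomposition holds for $\GHZ_{Q_{XY}}$, and crucially the $y$-blocks for $P$ and for $Q$ live in the same orthogonal subspaces.

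First I would dispose of the trace terms: reading off the diagonal we get $\Tr\GHZ_{P_{XY}}=\sum_{x,y}P_{XY}(x,y)=\Tr P$, and similarly for $Q$. Hence $\sqrt{(1-\Tr\GHZ_{P_{XY}})(1-\Tr\GHZ_{Q_{XY}})}=\sqrt{(1-\Tr P)(1-\Tr Q)}$, so the extra term in the generalized fidelity matches on both sides of the desired equality.

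Next I would compute $\Tr\sqrt{\GHZ_{Q_{XY}}^{1/2}\GHZ_{P_{XY}}\GHZ_{Q_{XY}}^{1/2}}$ block by block. In the block labelled by $y$, write $\ket{\phi_y(P)}=\sqrt{P_Y(y)}\,\ket{v_y}$ and $\ket{\phi_y(Q)}=\sqrt{Q_Y(y)}\,\ket{u_y}$ with $\ket{u_y},\ket{v_y}$ unit vectors. Then $\GHZ_{Q_{XY}}^{1/2}$ restricted to this block is $\sqrt{Q_Y(y)}\,\ketbra{u_y}{u_y}$, and a short rank-one computation gives
\begin{equation}
\Tr\sqrt{\GHZ_{Q_{XY}}^{1/2}\GHZ_{P_{XY}}\GHZ_{Q_{XY}}^{1/2}}\bigg|_{\text{block }y}=\sqrt{P_Y(y)Q_Y(y)}\,|\braket{u_y}{v_y}|.
\end{equation}
Plugging in $\sqrt{P_Y(y)Q_Y(y)}\,|\braket{u_y}{v_y}|=|\braket{\phi_y(P)}{\phi_y(Q)}|=\sum_x\sqrt{P_{XY}(x,y)Q_{XY}(x,y)}$ and summing over $y$ yields the Bhattacharyya coefficient $\sum_{x,y}\sqrt{P_{XY}(x,y)Q_{XY}(x,y)}$, which is exactly $\Tr\sqrt{Q^{1/2}PQ^{1/2}}$ for the diagonal states corresponding to $P$ and $Q$.

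Combining the two contributions yields $\fidelity(\GHZ_{P_{XY}},\GHZ_{Q_{XY}})=\fidelity(P,Q)$ via the formula in eq.~\eqref{eq:fidelity}, and the identity of purified distances follows from $\purifieddistance=\sqrt{1-\fidelity^2}$. The only real obstacle is bookkeeping the rank-one square roots correctly; once the block-diagonal picture is in place the rest is essentially the observation that the quantum fidelity between two classical distributions is the Bhattacharyya coefficient.
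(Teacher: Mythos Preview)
Your proof is correct and follows essentially the same route as the paper: both reduce to showing $\fidelity(\GHZ_{P_{XY}},\GHZ_{Q_{XY}})=\fidelity(P,Q)$, match the trace terms, and then exploit the block-diagonal (rank-one per $y$-block) structure to compute $\Tr\sqrt{\GHZ_{Q}^{1/2}\GHZ_{P}\GHZ_{Q}^{1/2}}$ blockwise as the Bhattacharyya coefficient. The only cosmetic difference is that the paper first normalizes via the homogeneity of the fidelity term, whereas you work directly with the subnormalized vectors; both are fine.
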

\begin{proof}
It is enough to estabilish $\fidelity(\GHZ_{P_{XY}},\GHZ_{Q_{XY}})=\fidelity(P,Q)$, since the purified distance is a function of the fidelity. First note that $\Tr\GHZ_{P_{XY}}=\sum_{x,y}P_{XY}(x,y)$, therefore the first terms in \eqref{eq:fidelity} agree. The second term is homogeneous in both states, therefore we can assume that $P,Q$ are normalized.

The states are block-diagonal with respect to the direct sum decomposition $\mathbb{C}^{\mathcal{X}}\times\mathbb{C}^{\mathcal{Y}}=\bigoplus_{y\in\mathcal{Y}}\mathbb{C}^{\mathcal{X}}$, therefore $\sqrt{\GHZ_{Q_{XY}}^{1/2}\GHZ_{P_{XY}}\GHZ_{Q_{XY}}^{1/2}}$ can be computed blockwise. The contribution of block $y$ to the trace is
\begin{equation}
\sqrt{P_Y(y)}\sqrt{Q_Y(y)}\sum_{x\in\mathcal{X}}\sqrt{P_{X|Y}(x|y)Q_{X|Y}(x|y)}=\sum_{x\in\mathcal{X}}\sqrt{P_{XY}(x,y)Q_{XY}(x,y)}.
\end{equation}
The sum of this expression over $y$ is equal to $\fidelity(P,Q)$.
\end{proof}

\begin{proof}[Proof of Lemma~\ref{lem:approximatefrommixture}]
We may suppose that $|\mathcal{X}|$ is so large that the alternative smooth min-entropy of $P$ is attained at a normalized distribution $P'$. This means that $\purifieddistance(P,P')\le\epsilon$ and (by Lemma~\ref{lem:altminentropyfromRenyientropy})
\begin{equation}
\downentropy[\infty](X|Y)_{P'}=\altminentropy[\epsilon](X|Y)_P\ge\upentropy[\alpha](X|Y)_P-\left(1+\frac{1}{\alpha-1}\right)\log\frac{10}{\epsilon^2}.
\end{equation}
By Lemma~\ref{lem:mixturedistance} we have $\purifieddistance(\GHZ_{P_{XY}},\GHZ_{P'_{XY}})=\purifieddistance(P_{XY},P'_{XY})\le\epsilon$ and by Lemma~\ref{lem:exactfrommixture} the relation $\GHZ_{P'_{XY}}\loccto\GHZ^{\otimes N}$ is true. Therefore $\GHZ_{P_{XY}}\loccto[\epsilon]\GHZ^{\otimes N}$ as claimed.
\end{proof}

\newcommand{\etalchar}[1]{$^{#1}$}


\begin{thebibliography}{CCD{\etalchar{+}}10}

\bibitem[Ari77]{arimoto1977information}
S~Arimoto.
\newblock Information mesures and capacity of order $\alpha$ for discrete
  memoryless channels.
\newblock In Imre Csisz{\'a}r and Peter Elias, editors, {\em Topics in
  information theory}, volume~16 of {\em Colloquia mathematica Societatis
  J{\'a}nos Bolyai}, pages 41--52. North-Holland Publishing Co., 1977.

\bibitem[BBPS96]{bennett1996concentrating}
Charles~H Bennett, Herbert~J Bernstein, Sandu Popescu, and Benjamin Schumacher.
\newblock Concentrating partial entanglement by local operations.
\newblock {\em Physical Review A}, 53(4):2046, 1996.

\bibitem[BPR{\etalchar{+}}00]{bennett2000exact}
Charles~H Bennett, Sandu Popescu, Daniel Rohrlich, John~A Smolin, and Ashish~V
  Thapliyal.
\newblock Exact and asymptotic measures of multipartite pure-state
  entanglement.
\newblock {\em Physical Review A}, 63(1):012307, 2000.

\bibitem[CCD{\etalchar{+}}10]{chen2010tensor}
Lin Chen, Eric Chitambar, Runyao Duan, Zhengfeng Ji, and Andreas Winter.
\newblock Tensor rank and stochastic entanglement catalysis for multipartite
  pure states.
\newblock {\em Physical review letters}, 105(20):200501, 2010.

\bibitem[CDS08]{chitambar2008tripartite}
Eric Chitambar, Runyao Duan, and Yaoyun Shi.
\newblock Tripartite entanglement transformations and tensor rank.
\newblock {\em Physical review letters}, 101(14):140502, 2008.

\bibitem[CT12]{cover2012elements}
Thomas~M Cover and Joy~A Thomas.
\newblock {\em Elements of information theory}.
\newblock John Wiley \& Sons, 2012.

\bibitem[CVZ16]{christandl2016asymptotic}
Matthias Christandl, P{\'e}ter Vrana, and Jeroen Zuiddam.
\newblock Asymptotic tensor rank of graph tensors: beyond matrix
  multiplication.
\newblock {\em arXiv preprint arXiv:1609.07476}, 2016.

\bibitem[CW90]{coppersmith1990matrix}
Don Coppersmith and Shmuel Winograd.
\newblock Matrix multiplication via arithmetic progressions.
\newblock {\em J. Symbolic Computation}, 9(3):251--280, 1990.

\bibitem[DVC00]{dur2000three}
Wolfgang D{\"u}r, Guifre Vidal, and J~Ignacio Cirac.
\newblock Three qubits can be entangled in two inequivalent ways.
\newblock {\em Physical Review A}, 62(6):062314, 2000.

\bibitem[Fra02]{franz2002moment}
Matthias Franz.
\newblock Moment polytopes of projective {G}-varieties and tensor products of
  symmetric group representations.
\newblock {\em J. Lie Theory}, 12(2):539--549, 2002.

\bibitem[GLP05]{groisman2005entanglement}
Berry Groisman, Noah Linden, and Sandu Popescu.
\newblock Entanglement concentration of three-partite states.
\newblock {\em Physical Review A}, 72(6):062322, 2005.

\bibitem[HBB99]{hillery1999quantum}
Mark Hillery, Vladim{\'\i}r Bu{\v{z}}ek, and Andr{\'e} Berthiaume.
\newblock Quantum secret sharing.
\newblock {\em Physical Review A}, 59(3):1829, 1999.

\bibitem[HOW05]{horodecki2005partial}
Micha{\l} Horodecki, Jonathan Oppenheim, and Andreas Winter.
\newblock Partial quantum information.
\newblock {\em Nature}, 436(7051):673, 2005.

\bibitem[HR11]{holenstein2011randomness}
Thomas Holenstein and Renato Renner.
\newblock On the randomness of independent experiments.
\newblock {\em IEEE transactions on information theory}, 57(4):1865--1871,
  2011.

\bibitem[Nie99]{nielsen1999conditions}
Michael~A Nielsen.
\newblock Conditions for a class of entanglement transformations.
\newblock {\em Physical Review Letters}, 83(2):436, 1999.

\bibitem[SME17]{streltsov2017rates}
Alexander Streltsov, Clement Meignant, and Jens Eisert.
\newblock Rates of multi-partite entanglement transformations and applications
  in quantum networks.
\newblock {\em arXiv preprint arXiv:1709.09693}, 2017.

\bibitem[SS42]{salem1942sets}
Rapha{\"e}l Salem and Donald~C Spencer.
\newblock On sets of integers which contain no three terms in arithmetical
  progression.
\newblock {\em Proceedings of the National Academy of Sciences},
  28(12):561--563, 1942.

\bibitem[Str91]{strassen1991degeneration}
Volker Strassen.
\newblock Degeneration and complexity of bilinear maps: some asymptotic
  spectra.
\newblock {\em J. Reine Angew. Math}, 413:127--180, 1991.

\bibitem[SVW05]{smolin2005entanglement}
John~A Smolin, Frank Verstraete, and Andreas Winter.
\newblock Entanglement of assistance and multipartite state distillation.
\newblock {\em Physical Review A}, 72(5):052317, 2005.

\bibitem[TCR10]{tomamichel2010duality}
Marco Tomamichel, Roger Colbeck, and Renato Renner.
\newblock Duality between smooth min-and max-entropies.
\newblock {\em IEEE Transactions on Information Theory}, 56(9):4674--4681,
  2010.

\bibitem[Tom15]{tomamichel2015quantum}
Marco Tomamichel.
\newblock {\em Quantum Information Processing with Finite Resources:
  Mathematical Foundations}, volume~5.
\newblock Springer, 2015.

\bibitem[TSSR11]{tomamichel2011leftover}
Marco Tomamichel, Christian Schaffner, Adam Smith, and Renato Renner.
\newblock Leftover hashing against quantum side information.
\newblock {\em IEEE Transactions on Information Theory}, 57(8):5524--5535,
  2011.

\bibitem[VC15]{vrana2015asymptotic}
P{\'e}ter Vrana and Matthias Christandl.
\newblock Asymptotic entanglement transformation between {W} and {GHZ} states.
\newblock {\em Journal of Mathematical Physics}, 56(2):022204, 2015.

\bibitem[Woo98]{wootters1998entanglement}
William~K Wootters.
\newblock Entanglement of formation of an arbitrary state of two qubits.
\newblock {\em Physical Review Letters}, 80(10):2245, 1998.

\bibitem[YE09]{yang2009entanglement}
Dong Yang and Jens Eisert.
\newblock Entanglement combing.
\newblock {\em Physical review letters}, 103(22):220501, 2009.

\bibitem[YGD14]{yu2014obtaining}
Nengkun Yu, Cheng Guo, and Runyao Duan.
\newblock Obtaining a {W} state from a {G}reenberger--{H}orne--{Z}eilinger
  state via stochastic local operations and classical communication with a rate
  approaching unity.
\newblock {\em Physical review letters}, 112(16):160401, 2014.

\end{thebibliography}
\end{document}